\newif\ifarxiv\arxivtrue
\newtheorem{remark}{Remark}
\newtheorem{lemma}{Lemma}
\newtheorem{theorem}{Theorem}
\let\OLDthebibliography\thebibliography
\renewcommand\thebibliography[1]{
	\OLDthebibliography{#1}
	\setlength{\parskip}{-.4pt}
}
\DeclarePairedDelimiter{\norm}{\lVert}{\rVert}
\title{Comeback Kid: Resilience for Mixed-Critical Wireless Network Resource Management}
\author{\IEEEauthorblockN{Robert-Jeron Reifert, Stefan Roth, Alaa Alameer Ahmad and Aydin Sezgin}\\
\IEEEauthorblockA{Institute of Digital Communication Systems, Ruhr University Bochum, Bochum, Germany\\
Email: \{robert-.reifert,stefan.roth-k21,alaa.alameerahmad,aydin.sezgin\}@rub.de}
\thanks{Part of this paper was presented at the IEEE International Conference on Communications Workshops, May 2022 \cite{reifert2022energy}. 
\ifarxiv%
This work has been submitted to the IEEE for possible publication. Copyright may be transferred without notice, after which this version may no longer be accessible.\newline
\else%
An extended version of this paper is available on arxiv \cite{comebackkid}. \newline
\fi%
This work was funded in part by the Federal Ministry of Education and Research (BMBF) of the Federal Republic of Germany (F\"orderkennzeichen 01IS18063A, ReMiX), and in part by the Deutsche Forschungsgemeinschaft (DFG, German Research Foundation) under Germany's Excellence Strategy - EXC 2092 CASA - 390781972}
}
\date{December 2021}
\begin{document}

\maketitle

\begin{abstract}
    The future sixth generation (6G) of communication systems is envisioned to provide numerous applications in safety-critical contexts, e.g., driverless traffic, modular industry, and smart cities, which require outstanding performance, high reliability and fault tolerance, as well as autonomy. 
    Ensuring criticality awareness for diverse functional safety applications and providing fault tolerance in an autonomous manner are essential for future 6G systems. Therefore, this paper proposes jointly employing the concepts of resilience and mixed criticality.
    In this work, we conduct 
    physical layer resource management in cloud-based networks under the rate-splitting paradigm, which is a promising factor towards achieving high resilience. We recapitulate the concepts individually, outline a joint metric to measure the criticality-aware resilience, and verify its merits in a case study.
    We, thereby, formulate a non-convex optimization problem, derive an efficient iterative algorithm, propose four resilience mechanisms differing in quality and time of adaption, and conduct extensive numerical simulations.
    Towards this end, we propose a highly autonomous rate-splitting-enabled physical layer resource management algorithm for future 6G networks respecting mixed-critical quality of service (QoS) levels and providing high levels of resilience.
    Results emphasize the considerable improvements of incorporating a mixed criticality-aware resilience strategy under channel outages and strict QoS demands. The rate-splitting paradigm is particularly shown to overcome state-of-the-art interference management techniques, and the resilience and throughput adaption over consecutive outage events reveals the proposed schemes contribution towards enabling future 6G networks.

\end{abstract}
\begin{IEEEkeywords}
    Resilience, fault tolerance, mixed criticality, rate-splitting multiple access, resource management, quality of service.
\end{IEEEkeywords}

\section{Introduction}

\subsection{Motivation}
    The road towards the sixth generation (6G) of wireless communication networks is already being pursued by researchers around the globe \cite{8922617,8820755}. Through a wide range of applications, the empowerment of anytime anywhere access, and an overwhelming amount of connected devices, 6G brings enormous challenges towards the development of future network technologies. Particularly, use cases such as wireless-based cloud office for small and home office, smart cities, and smart factory, depend on high-performance and reliable networks \cite{erricson}.\\ 
    \indent
	It is forecasted that the number of internet of things (IoT) connections increases from $14.6$ billion in 2021 to $30.2$ billion in 2027 \cite{erricson}.
	Hence, there is a huge number of devices with different levels of criticality, such as safety-critcal, mission-critical, and low-critical IoT devices, which also coexist within one system \cite{sota20}.
	In industrial context, mixed criticality corresponds to different priorities 
	of applications, e.g., a security monitoring system is more critical than a maintenance scheduler. To ensure fulfilling the quality of service (QoS) demands, we investigate QoS target capabilities of the considered network. On the physical layer, QoS is often translated to the allocated rates of network participants \cite{9145363,7938594}. Thereby, the QoS assigned to the nodes is designed to match the desired data rates (target rates), which depend on the subscribed contract (service provider networks) or criticality level (industrial context).\\ 
	\indent
	The continuously increasing IoT connectivity brings along another hurdle in the design of future 6G networks, namely resilience. Resilience is the capacity of a system to absorb a disturbance and reorganize while undergoing change so as to still retain its essential function. In other words, resilience captures a system's ability to maintain functionality facing errors, adapt to erroneous influences, and recover the functionality in a timely manner. 
	As this should be achieved in an automated fashion without the need for human interaction \cite[Principle P16]{STERBENZ20101245}, autonomy is a central principle at the design stage.
	Resilience is a topic of interest throughout various areas of industry and academia (psychology \cite{doi:10.1177/0959354318783584}, industrial-ecological systems \cite{5936913}, communications \cite{sota7}, security \cite{6584933}). Generally, the overall concept of resilience includes different system characteristics, i.e., \emph{detection}, \emph{remediation}, and \emph{recovery} \cite{STERBENZ20101245}. \\
	\indent
	Especially in the context of high critical IoT nodes, the concept of resilience is of significant importance.
	Providing a robust communication system is equally vital that timely recovery from failures to an acceptable service level, in order to deploy such systems for real-time safety-critical applications. In this context, \textit{rate-splitting} comes into play as an efficient and robust communication scheme \cite{9445019,7513415}. Originating in the early 80's \cite{1056307}, and shown to achieve within one-bit of the interference channel capacity \cite{4675741}, rate-splitting mutliple access (RSMA) achieved significant attention in research, e.g., \cite{5910112,8846706,mao2022ratesplitting,8385499,6034007,7996346,6169200}. %
	\begin{figure*}
		\centering
		\includegraphics[width=.9\linewidth]{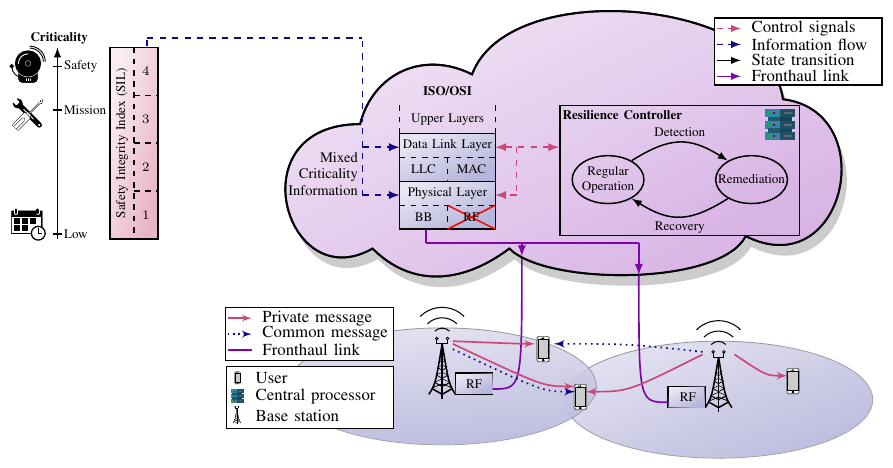}\vspace*{-.3cm}
		\caption{The considered cloud-based network architecture including mixed criticality, resilience, and RSMA elements.\protect\footnotemark}
		\label{overall_figure}
		\vspace*{-.6cm}
	\end{figure*}%
	Conventionally, each user is assigned to a single message stream (\emph{private message}), however, under RSMA, an additional \emph{common message} is utilized for two reasons: $(1)$ 
	Taking over parts of the data transmission from the private message
	, and $(2)$ interference mitigation by common message decoding at other users to reduce the interference level when decoding their own private messages. Especially due to the enhanced opportunities for communication (multiple message streams, \emph{redundancy}) and different purposes of streams (private and common, \emph{diversity}), we identify RSMA as a promising resiliency enhancing paradigm.\\
	\indent 
	In this work, we aim at tackling the fundamental challenge of integrating mixed criticality levels in the physical layer of a wireless communication system. Thereby designing a resilient RSMA-enabled network architecture ensuring high robustness, automated adaption, and fast recovery, especially when the network resources are constrained. An example of such architecture can be seen in Fig.~\ref{overall_figure}, where we distinguish between mixed criticality information, the ISO/OSI lower layers, and a resilience controller at the cloud connected to the RSMA-enabled radio access network. Mixed criticality information serves as input to the ISO/OSI model, while the resilience controller manages the network's operation via control signals.
	A resilient resource management promises to enhance future networks' performance in an automated manner without inducing major losses in terms of service quality. To the best of the authors knowledge, this is the first work which considers resilience and mixed criticality for the physical layer of wireless communication systems under the RSMA paradigm. We aim to provide the general concepts, design recommendations, and a case study to evaluate the proposed methodology.

\subsection{Related Literature}
    The considered methodology of combining resilience and mixed criticality for physical layer resource management in this paper is related to works residing in the domains of resilience (especially network resilience), robustness, reliability, and mixed criticality of communication systems. Additionally, there are recent related works considering the RSMA paradigm.

	\footnotetext{The safety integrity levels (SILs) are applications of mixed criticality in real systems (IEC 61508) \cite{sil}. For more details we refer to section \ref{ssec:mixedcrit}.}%
	The term resilience has its roots in the Latin verb \emph{resiliere}, meaning to rebound or recoil \cite{mcaslan2010concept}, e.g., "\emph{[...] saepe super ripam stagni consistere, saepe in gelidos resilire lacus [...]}", taken from ancient literature about a metamorphosis from human to frogs, which often sit at waterside and hastily jump back (recoil) into the cold water \cite{resilience_ancient}. Nowadays, resilience relates to many engineering fields \cite{5936913,hosseini2016review}, environmental and regional studies \cite{holm2013}, psychology \cite{doi:10.1177/0959354318783584}, as well as economics \cite{10.1093/ser/mww015}. While definitions and methodologies may differ among the research directions, a common overlap is that resilience refers to some kind of disruption and the return to the normal situation \cite{Najarian2019}. 
	A great amount of work towards (network) resilience in communication systems was done by the ResiliNets initiative \cite{resilienets}. Especially, the seminal paper \cite{STERBENZ20101245} point out axioms, strategies, and principles of resilience in communication networks, with a focus on the internet. 
	The recent book \cite{rak2020} describes techniques for disaster-resilient communication networks and includes many works of international researchers. 
	Work \cite{borhani2020secure} elaborates on communications in industrial IoT describing potential network architectures, with a focus on security. 
	In \cite{sota1}, the authors investigate the reliability of the IP multimedia subsystem and define the interplay between \emph{availability} and \emph{reliability} and their relation to resilience. 
	Work \cite{sota3} describes a framework to evaluate network dependability and performability in the face of challenges such as attacks and disasters. 
	The authors point out, that redundancy and diversity increase the reliability but also increase the costs. The Survivable Mobile Wireless Networking (SUMOWIN) project \cite{10.1145/570681.570685} reports challenges and ideas for routing and networking in unreliable mobile wireless networks, emphasizing the need for adaptive and agile networking and sattelite support. 
	As many related works on resilience in communications regard networking problems, the work \cite{STERBENZ20101245} notes a major challenge as \emph{failures at a lower layer}, e.g., a fibre cut causes a link-layer failure, which has to be remediated by re-routing at a higher layer. However, with the utilization of wireless communications under cloud-based architectures, it is necessary to include resilience at the lower layers to assist the overall network resiliency capabilities.
	
	In this work, we adapt the concept of resilience for the physical layer of wireless communication systems, a field in which only limited considerations of resilience exist. 6G communication is the enabling infrastructure for many critical applications and therefore resilience become a very important topic in these scenarios. Some related chapters in \cite{rak2020} reside in this domain: 
	%
    (a) In \cite{Bruzgiene2020}, QoS in modular positioning systems, wireless sensor networks, and free-space optical (FSO) communication systems is reviewed under weather disruptions.
    (b) In \cite{Ivanov2020}, availability of FSO systems under atmospheric impacts is studied.
    (c) In \cite{Cinkler2020}, resilience enhancing techniques for 5G systems are studied, namely frequency fallback, segment interleaving, and multi-operator protection.
    Further works on resilience for wireless communications include \cite{7306543}, with radio and FSO backhauling for network resilience, \cite{sota5}, emphasizing the need for intelligent fault management and mitigation strategies at design and run-time, \cite{sota6}, using unlicensed spectrum band as well as non-terrestrial networks, \cite{sota10}, where an experimental PC-USRP hardware platform evaluated the performance in the face of pulse interference for in industrial environments, and \cite{sota12}, where reliability and availability of FSO indoor systems is studied.
	Most of these works consider resilient systems, but, do not consider holistic metrics for resilience.
	
	The concepts of robustness and reliability, being one aspect of resilience, correspond, in part, to the concept of ultra-reliable low-latency communication \cite{sota13,sota14}. Especially in \cite{sota13}, the characteristics and dynamics of wireless channels under low-latency and high-reliability constraints is considered. 
	It is stated that wireless channels are unreliable due to fading, which can be tackled by introducing diversity techniques, i.e., time/frequency/spatial-diversity, sub-carrier coding, and multiple antennas. However, such systems typically do not consider the entirety of resilience aspects and include only a singe criticality level.
	
	The concept of mixed criticality has been introduced in 2007 for task scheduling in real-time systems \cite{sota17}. Since then, mixed criticality has been adapted for a wider range of applications in the field of communications \cite{sota18,sota19,sota20,sota21}.
	In general, mixed criticality refers to communication links having different priority levels, typically because failures have different consequences. Usually, the links are categorized as safety, mission, and low-critical links. AirTight, a protocol for time-critical cyber physical systems (CPS) including real-time and mixed criticality requirements, has been proposed in \cite{sota18} and \cite{sota19}. Thereby, a criticality-aware fault model is utilized to capture external interference, in which mixed criticality is implemented as different amounts of maximum deadline misses of traffic links. The goal of reliable real-time performance for data delivery in industrial wireless sensor networks was approached in \cite{sota20}. Thereby, a criticality-aware wireless fieldbus protocol was proposed for different data flows with different importance levels, i.e., delay and reliability constraints. In the same context, work \cite{sota21} states that high-critical real-time flows must be guaranteed reliability in the face of errors. 
	However, these related works lack considerations of the lower layers on the protocol stack, especially the physical layer of wireless communication networks.

	Smart interference management schemes, such as RSMA, inherently posses some resilience characteristics, i.e., robustness, as they cope well with high interference scenarios, and can guarantee a certain connectivity. 
	We give a brief overview of related works on robustness of RSMA and QoS-aware RSMA schemes. A general overview of RSMA is provided in \cite{mao2022ratesplitting}, revisiting fundamental concepts and future trends, see also references therein. Scalability and robustness, especially robustness against channel state information (CSI) imperfections, of a RSMA-enabled network is analyzed in \cite{9445019}. Work \cite{7513415} considers the robust design problem for achieving max-min fairness among all users and shows the superior performance of the proposed RSMA scheme under CSI uncertainty. QoS constrainted power minimization is conducted in \cite{7513415,9145363}.
	Therein, the authors investigate the minimization of weighted-sum of transmit powers subject to per-user QoS constraints. Hence, such scheme is only feasible in networks, where the QoS is achievable. 
	However, this assumption is rather optimistic and we herein propose a more generalized scheme, which provides feasible solutions for networks with insufficient resources.
	
	While the related literature includes several general research directions, to fill some of the gaps, we build up onto these ideas and extend the contribution towards considering mixed criticality on the physical layer and the combination of resilience and mixed criticality for wireless communication resource management, especially using RSMA.
	
\subsection{Contributions}
    In this paper, we design a general framework for wireless communication systems that accounts for the merits of mixed criticality in the physical layer, and also provides aspects of resilience, i.e., high reliability, automated adaption to failures, and rapid recovery. As such, we recap the individual concepts of resilience and mixed criticality and define their manifestations for the physical layer resource management. Thereby, we strike a connection of QoS and criticality level, and define mathematical formulations of resilience metrics for wireless communications. Combining those two concepts, we employ the unified framework in a case study, involving RSMA-enabled cloud-radio access networks (C-RAN). 
    Thus, this work is one step forward to design criticality-aware, highly-automated 6G communication systems for various industrial and service applications. The contributions of the work are summarized as
    \begin{itemize}
        \item Establishing the concepts of resilience and mixed criticality for cloud-assisted wireless networks to enable the vision of 6G 
        \item Designing a unified framework to consider resilience and mixed criticality jointly in physical layer communications resource management
        \item Presenting a case study on RSMA-enabled cloud networks, in which the following resilience mechanisms are employed: 
        (1) A \emph{rate adaption} mechanism adjusts the QoS to a value feasible with the selected parameters,
        (2) a \emph{beamformer adaption} mechanism is optimizing the beamformers according to the new situation,
        (3) a \emph{BS-user-allocation adaption} mechanism optimizes the allocation of BSs to users and
        (4) a \emph{comprehensive adaption} mechanism re-visits the original formulations and adjusts all parameters jointly.\vspace{-.2cm}
    \end{itemize}

\subsection{Notation and Organization}

Vectors $\bm{a}$ (matrices $\bm{A}$) are denoted as bold lower-case (upper-case) letters, respectively. Sets $\mathcal{A}$ are denoted in calligraphic and have cardinality $|\mathcal{A}|$. $\mathbb{C}$ denotes the complex field, $\bm{0}_N$ an all-zero vector of dimension $N\times 1$, $|\cdot|$ the absolute value, $\norm[]{\cdot}_p$ the $L_p$-norm, and $(\cdot)^H$ the Hermitian transpose operator. At last, $\mathrm{Re}\{\cdot\}$ is the real part of a complex number.

\ifarxiv
    The rest of this paper is organized as follows: Section \ref{sec:resmixcrit} introduces the concepts of resilience and mixed criticality individually and subsequently provides a joint metric combining those concepts for the physical layer resource management. Then, in section \ref{sec:casestudy} we conduct a case study on RSMA-enabled C-RAN as follows: \ref{ssec:moti} case study motivation; \ref{ssec:sysmod} system model introduction; \ref{ssec:opt} problem formulation and solution; \ref{ssec:resmixcritalloc} resilient and criticality-aware resource allocation algorithm; \ref{ssec:sim} numerical simulations. At last, section \ref{sec:con} concludes this paper.
\fi
\section{Resilience and Mixed Criticality Concept}\label{sec:resmixcrit}
In this section, we introduce the concepts of resilience and mixed criticality and consequently combine these considerations into a joint metric based on the allocated and desired data rate.\vspace{-.3cm}
\subsection{Resilience}
    Resilience describes the ability of a network or system to provide and maintain an acceptable service level while facing errors or unexpected events that impact the work-flow of the service \cite{STERBENZ20101245}. 
    Hence, resilience is the ability to recover from erroneous conditions or faulty situations \cite{Najarian2019}. In this context, we clearly differentiate between resilience and robustness, whereas resilience is the more general concept which includes robustness along with other aspects, e.g., survivability, dependability, and many more \cite{STERBENZ20101245}.
    Therefore, a fault, error, and failure chain is established by the authors of \cite{STERBENZ20101245} in the network resilience context. Thereby, a fault is a system flaw that can be present on purpose (constraints) or accidentally (software bug, hardware flaw) and cause an observable error. An error is defined as a deviation between the observed and the desired state. A failure is the deviation of service functionality from a desired/required functionality, resulting from an error. In this work, a wireless communication resource management system is considered, which is faulty by nature due to the  unreliability of wireless channels and strict constraints, e.g., transmit power. Hence, errors may manifest as channels outages or hardware failures at the transmitter and/or receiver, which then corresponds to service failures, such as outages or deviations of the provided and requested QoS, e.g., data rate drops. 
    
    In general, redundancy and diversity are typical enablers of fault tolerance and survivability, respectively. 
    Herein, we aim
    to design a resource management mechanism that is resilient such that major service failures do not occur. Additionally, we consider the performance in terms of QoS fulfillment (QoS metrics are delay, throughput, etc. \cite{STERBENZ20101245}). As faults are inevitable, the herein developed resource management should also be able to mitigate the effects of service failures. 
    
    The work \cite{STERBENZ20101245} proposes four vital strategies for the design and assessment of resilient systems: $1)$ \emph{Defending} against threads to normal operation, which can be done actively and passively, e.g., via redundancy and diversity; $2)$ \emph{detection} of erroneous conditions, e.g., via cyclic redundancy checks; $3)$ \emph{remediation} of the erroneous effects, e.g., via automatic adaption of resource allocation; $4)$ \emph{recovery} to normal operations. Especially, strategies $2)-4)$ are shown in Fig.~\ref{overall_figure} in the resilience controller, which detects errors and performs resource allocation on the lower layers. Strategy $1)$ \emph{defending}, is implicitly considered to be part of the initial resource management solution, i.e., we propose passive defense against errors (redundancy and diversity). 
    
    Further, to measure the resilience of a system, the work \cite{Najarian2019} proposes general resiliency metrics. In this work, such metrics are tailored to the physical layer of wireless communication systems to make them applicable in this context. Especially, the considerations capture the resilience aspects of \emph{anticipation}, \emph{absorption}, \emph{adaption}, and \emph{recovery}. Here, \emph{anticipation} is happening before an adverse event (prefailure), which corresponds to $1)$ \emph{defending}. Note that \emph{anticipation} is also not covered in the considered postfailure-related resilience metric. It is assumed to be done a priori by the network operator using established techniques, i.e., we assume the system to be in an optimized state initially until a failure occurs. For more details to prefailure aspects, we refer to works on reliability, e.g., \cite{sota14} and references therein. Next, we characterize the remaining resilience aspects:
    \begin{itemize}
        \item \emph{Absorption}: Measure of the ability to maintain functionality facing errors, i.e., how well a system absorbs a hazard's impact and restrains the severity, corresponding to $1)$ \emph{defending}.
        \item \emph{Adaption}: Measure of the loss of functionality after performance degradation until recovery, i.e., how well the system utilizes existing resources to mitigate the failure consequences, corresponding to $3)$ \emph{remediation}.
        \item \emph{Recovery}: Measure of the ability to recover to a stable state after experiencing degraded functionality, i.e., how fast the system can return to normal (or stable) operation, corresponding to $4)$ \emph{recovery}.
    \end{itemize}
    Note that this work assumes ideal $2)$ detection, i.e., perfect and immediate knowledge of failure conditions.

    A major complication in physical layer communication resource management comes from wireless channels, which are unreliable due to fading, blockage, the nature of electromagnetic radiation, etc. Such behavior can be tackled by introducing diversity techniques, i.e., time/frequency/spatial-diversity, sub-carrier coding, and multiple antennas, respectively \cite{sota13}. While such techniques generally aim at providing reliability/robustness of the wireless communication, we note that resilience includes more aspects which need to be considered from an overall network perspective. Mostly, resilience is implemented as a redundancy mechanism by retransmitting the data. While retransmission is the only feasible solution to recover from an outage, i.e., packet loss, resiliency mechanisms based solely on retransmission are not able to account for long-term outages due to blockage, hardware impairments, or transmitter outages. Transmit devices could face an infinite loop of retransmission on such link failures which deteriorates any spectral and energy efficiency. Many network-layer works consider the resiliency by re-routing traffic, thereby avoiding the failed links. To depart from these works, we consider the resiliency capabilities of lower layers for the resource management in wireless communication systems.\vspace{-.1cm}

\subsection{Mixed Criticality}\label{ssec:mixedcrit}
    The integration and coexistence of data links/flows, which generally have different criticality (importance) levels, into a common communication system is the major challenge of the mixed criticality concept tailored to the communications domain.\footnote{Mixed criticality research originates from a focus on real-time embedded systems with the aim of integrating components of different criticalities into a common hardware platform \cite{burns2013mixed}.} Generally, there is not only a huge number of safety-critical, mission-critical, and low-critical IoT devices, moreover, these criticality levels coexist within one system \cite{sota20}. The concept is demonstrated in Fig,~\ref{overall_figure}, where safety integrity levels (SILs) specify the target level of function safety, for more details we refer to \cite{sil}.
    
    Common principles for mixed criticality in terms of different tasks are given in \cite{burns2013mixed}. Each data flow is defined by its period, deadline, computation time, and criticality level. In \cite{sota21}, this concept is extended to data flows, which are periodic end-to-end communications between source and destination. In this case, each flow is defined by its period, deadline, criticality level, number of hops, and routing path. 
    These considerations are high-level characteristics, i.e., while the physical layer is utilized solely for data transmission, it does not account for mixed criticality. While such mixed criticality characteristics for higher layers provide a certain degree of resilience, a cross layer resilience strategy design is needed as it is essential for the performance to implement criticality level also in physical layer. Hence, departing from the conventional definition of mixed criticality for higher layers, in this work we propose and design common definitions and concepts for mixed criticality in the physical layer.
    To provide methods for supporting such criticality considerations, we define QoS requirements of data links to represent the criticality level. We herein assume that the criticality levels are provided by higher layer algorithms, thereby the QoS demands are given to the underlying layers, which need to account for them, e.g., see Fig.~\ref{overall_figure}.
    
    
    Mixed criticality is usually implemented via weighting the utilities under optimization, e.g., weighted sum rate maximization. However, this doesn't necessary satisfy the requirements and demands of mixed criticality, the weights need to be carefully chosen to incorporate a mixed criticality factor and be updated in an adaptive fashion.
    Other approaches present in literature are the considerations of specific constraints capturing such system demands, e.g., QoS constraints. Such constraints might render a lot of networks infeasible, especially in case the QoS demands are overall hardly achievable. In general, such weight-based or constraint-based approaches are not well suited to provide mixed criticality from a network perspective.
    
    In this work, utilizing the novel approach proposed in \cite{reifert2022energy},
    we consider the mean squared error (MSE) of QoS deviation, i.e., the gap of allocated and desired rate. Thereby, we are able to achieve a good-effort QoS fulfillment and avoid the infeasibility problems of QoS constraints. Hence, we optimize the resource allocation under a mixed critical network in order to fulfill QoS demands subject to various network constraints.
    This approach captures 
    the possibility of having different criticality levels on the physical communication layer.
    
    In the next subsection, we address in details the performance metric capturing the essentials of mixed criticality and resilience. Especially, we propose a time-dependent criticality-aware resilience metric including aspects of absorption, adaption, and recovery.\vspace{-.2cm}
\subsection{Joint Metric}
    As per \cite{STERBENZ20101245}, the considered QoS metrics can be delay, throughput, packet delivery ratio, etc. A common translation of QoS on the physical layer are the data rates of different network participants (throughput, parts of the delay), e.g., \cite{9145363}. Thereby, the QoS assigned to a network participant is designed to match the desired data rates of the participant (target rates).
	The considered resilience metrics are postfailure-related formulations, we consider the system's \emph{absorption}, \emph{adaption}, and \emph{recovery}, in analogy to \cite{Najarian2019}. However, as opposed to \cite{Najarian2019}, a failure in our work is assumed to occur instantly, which allows us to forgo integrals in the metric. 
	
	\begin{figure}[!t]
		\centering
		\includegraphics[scale=1.1]{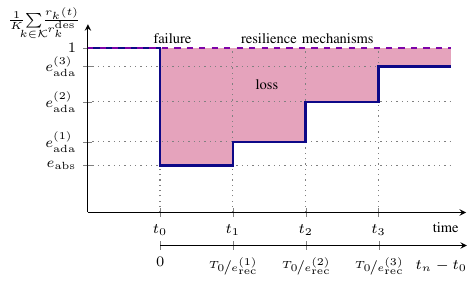}\vspace{-.3cm}
		\caption{Fraction of the allocated and desired sum rate as a function of discrete time points showing the resilience metrics for $n=3$ mechanisms.}
		\label{metric}\vspace*{-.6cm}
	\end{figure}
	
    \paragraph{Absorption} For the \emph{absorption}, we consider the time $t_0$ at which a service failure occurs. Given the time-dependent allocated rate $r_k(t)$ and the constant desired rate $r_k^\mathrm{des}$, the absorption is then defined as the sum of ratios of $r_k(t)$ and $r_k^\mathrm{des}$:
	\begin{equation}\label{eq:r1}
	    e_\mathrm{abs} = \frac{1}{K}\sum_{k\in\mathcal{K}}\frac{ r_k(t_0) }{ r_k^\mathrm{des} },
	\end{equation}
	evaluated at the time of failure $t_0$. 
    As illustrated in Fig.~\ref{metric}, \emph{absorption} can already be observed as performance drop at the time of failure $t_0$. Assuming a system does not enhance its functionality above the demand level, the optimal value of $e_\mathrm{abs}$ is $1$, which would mean the system's performance is unaffected by the failure. As soon as at least one user experiences a lower-than-desired QoS, $e_\mathrm{abs}$ would drop below $1$, resulting in a deviation from the optimum.
    
	\paragraph{Adaption} From the degraded steady state, mechanisms utilizing the existing resources begin to remediate the failure effects. This metric is then defined as ratio of the actual system functionality and the desired functionality at the time $t_n$, where the recovered state is reached, i.e., the time in which an automated remediation mechanism $n$ has finished operating. For an adaption mechanism $n$, this can be formulated as
	\begin{equation}
		e_\mathrm{ada}^{(n)} = \frac{1}{K}\sum_{k\in\mathcal{K}}\frac{ r_k(t_n) }{ r_k^\mathrm{des} }.
	\end{equation}
	If multiple remediation mechanisms are in place, we obtain different values for the adaption which represent the different remediation steps (see Fig.~\ref{metric}). Similar to the absorption, the optimal value of $e_\mathrm{ada}^{(n)}$ is $1$.
	
	\paragraph{Recovery} At last, we consider the time-to-recovery, which is measured between the failure time $t_0$ and the time of recovery $t_n$ and compared to a desired time-to-recovery $T_0$ as
	\begin{equation}
		e_\mathrm{rec}^{(n)} = \begin{cases}
			1 \quad\quad\;\; t_n-t_0 \leq T_0\\
			\frac{T_0}{t_n-t_0} \quad \text{otherwise}
		\end{cases},
	\end{equation}
	where the optimal $e_\mathrm{rec}^{(n)}$ is $1$. For the time-to-recovery, Fig.~\ref{metric} includes a second horizontal axis.
	
	With these considerations at hand, we propose a criticality-aware resilience metric considering the aspects of absorption, adaption, and recovery. The utilized resilience metric is a linear combination of all components defined as 
	\begin{equation}
	    e^{(n)} = \lambda_1 e_\mathrm{abs} + \lambda_2 e_\mathrm{ada}^{(n)} + \lambda_3 e_\mathrm{rec}^{(n)},
	\end{equation}
	where $\lambda_a$ are non-negative weights that satisfy $\sum_{a=1}^3 \lambda_a = 1$. The vector $\boldsymbol{\lambda} = [\lambda_1, \lambda_2, \lambda_3]$ contains the individual weights. 
	
	With the above considerations at hand, a metric for measuring and quantifying the resilience of wireless communications networks is proposed. The challenge remains to, on the one hand include, robust mixed-critical resource management, and on the other hand provide mechanisms for an autonomous controller. Such controller needs to enable smart adaption of resources in the face of system errors, e.g., outages, in order to recover the data rates. In other words, initially, the resources should be allocated to fulfill the mixed-critical QoS demands whilst providing robustness, e.g., by overprovisioning certain resources, or multi-connectivity. In case of system errors, e.g., outage events, resources need to be adapted in order to best restore the performance, e.g., re-allocating some or all resources from the initial solution. The decision on the resource allocation mechanism depends on the desired trade-off between quality and complexity of the adaption.
	
	In what follows, a verification and demonstration of the proposed resilience and mixed criticality framework is provided in terms of a case study. Especially, the case study is related to cloud-based wireless networks, with a focus on RSMA-enabled C-RANs.
	
\section{Case Study: RSMA-enabled C-RAN}\label{sec:casestudy}
    \ifarxiv
    \subsection{Motivation}\label{ssec:moti}
    \fi
    In this section, we investigate the proposed resilience and mixed criticality framework within a network, where various users are connected to multiple base stations (BSs), which are jointly controlled by a central processor (CP) at the cloud, as drawn in \figurename~\ref{overall_figure}. Such C-RAN is a promising network architecture, which enables centralization and virtualization providing high elasticity, high QoS, and good energy efficiency \cite{7378422}. Therefore, it is a suitable candidate to demonstrate the usage of the proposed framework.
    To ensure fulfilling the QoS demands, we investigate QoS target capabilities. Thereby, the QoS assigned to a user is designed to match the desired data rate of the user (target rates). Hence, we aim at designing the C-RAN to enable mixed criticality regarding different communication links. For the communication between the the BSs and the users, we are employing the RSMA as a promising resilience-enhancing paradigm.
    
    
    In this case study, we consider minimizing the MSE of QoS deviation, i.e., the gap of allocated and desired rate. Thereby, we utilize a mixed-critical C-RAN under the RSMA paradigm in order to fulfill all QoS demands. As such, we jointly optimize the precoding vectors and allocated rates subject to per-BS fronthaul capacity, maximum transmit power, and per-user achievable rate constraints as an initial solution. Building upon such system, there are multiple recovery mechanisms, which possibly can be applied and distinguish in the quality of recovery and execution times. In this case study, we apply
    \begin{itemize}
    \item a \emph{rate adaption} mechanism to allocate feasible QoS parameters within the failure condition, 
    \item a \emph{beamformer adaption} mechanism that updates the beamformers according to the new situation,
    \item a \emph{BS-user-allocation adaption} mechanism which optimizes the allocation of BSs to users and
    \item a \emph{comprehensive adaption} mechanism to optimizes the QoS to a value feasible with the selected parameters.
    \end{itemize}
	Upon proposing resilience mechanisms for resource management, we present a resilient RSMA rate management algorithm jointly managing allocated rates, beamforming vectors, and BS-user clustering.\vspace{-.3cm}

    \subsection{System Model}\label{ssec:sysmod}
    The network considered is a downlink C-RAN utilizing \emph{data-sharing} transmission strategy. Under such architecture, a cloud coordinates ${B}$ BSs via fronthaul links in order to serve ${K}$ users, where the CP at the cloud performs most baseband processing tasks. More specifically, the CP encodes messages into signals and designs the joint beamforming vectors. These signals and coefficients are then forwarded to the BSs to perform modulation, precoding, and radio transmission. We denote $\mathcal{B}$ and $\mathcal{K}$ as the set of BSs and single-antenna users, respectively, where the number of BS antennas is $L$. Further network parameters are the fronthaul capacity ${C}_b^{\text{max}}$, the maximum transmit power ${P}_b^{\text{max}}$, and the transmission bandwidth $\tau$. We assume the cloud to have access to the full CSI, which is reasoned in the assumption of a \emph{block-based transmission model}. A transmission block is made of a couple of time slots in which the channel state remains constant, thus the CSI needs to be acquired at the beginning of each block. Without the loss of generality, the proposed algorithm optimizes the resource allocation and provides resiliency within one such block. 
    
    Under the RSMA paradigm, messages requested by users are split into a private and common part. These messages are independently encoded into $s_k^p$ and $s_k^c$, the private and common signal to be transmitted to user $k$. We assume the signals to be zero mean, unit variance complex Gaussian variables with the property of being independent identically distributed and circularly symmetric. Note that under RSMA $s_k^p$ is intended to be decoded by user $k$ only, while $s_k^c$ may be decoded at multiple users, which necessitates a successive decoding strategy.
    
    The channel vector linking user $k$ and BS $b$ is denoted by $\bm{h}_{b,k}\in \mathbb{C}^{L\times 1}$, and we define $\bm{h}_{k}=[(\bm{h}_{1,k})^T,\ldots,(\bm{h}_{B,k})^T]^T\in\mathbb{C}^{L B\times 1}$ as the aggregate channel vector of user $k$. Similar to these definitions, we denote the beamforming vectors as $\bm{w}_{b,k}^o\in\mathbb{C}^{L\times 1}$ and the aggregate beamforming vectors as $\bm{w}_{k}^o=[(\bm{w}_{1,k}^o)^T,\ldots,(\bm{w}_{B,k}^o)^T]^T\in\mathbb{C}^{L B\times 1}$, where $o\in\{p,c\}$ denotes private and common vectors, respectively. Throughout this work, the index $o$ denotes the differentiation of private and common signal related variables.
    
    Due to limited radio resources, BSs naturally have limited capabilities regarding the number of served users. Hence, we introduce the sets $\mathcal{K}_b^p$ and $\mathcal{K}_b^c$, which include only the users whose private or common signal is served by BS $b$. These clusters can be formulated as\vspace{-.1cm}
	\begin{subequations}
	\begin{align}
	    \mathcal{K}_b^p &= \left\{ k \in\mathcal{K} | \text{BS } b \text{ serves } s_k^p \right\},\;\\ \mathcal{K}_b^c &= \left\{ k \in\mathcal{K} | \text{BS } b \text{ serves } s_k^c \right\}.
	\end{align}
	\end{subequations}
	Note that the design of these sets has crucial impact on the system performance. 
	\ifarxiv%
	We provide more details in Appendix~\ref{GAP}.
	\else%
	More details can be found in Appendix C of the extended version of this paper available on arxiv \cite{comebackkid}.
	\fi%
	Thereby, the previously defined beamforming vectors often contain zeros, i.e., $\bm{w}_{b,k}^o = \bm{0}_L$ when $k\notin\mathcal{K}_b^o$. Each message stream transmits the data via a specific rate $r_k^o$, while the total rate assigned to user $k$ is $r_k = r_k^p + r_k^c$. To ensure operation of the considered network, the CP has to respect the finite fronthaul capacity of the CP-BS links with
	\begin{equation}\label{eq:fthl}
	    \sideset{}{_{k\in\mathcal{K}_b^p}}\sum r_k^p + \sideset{}{_{k\in\mathcal{K}_b^c}}\sum r_k^c \leq C_b^\mathrm{max}.
	\end{equation}
	In what follows, we explain the construction of the transmit signal and the successive decoding scheme for the common streams.
	\subsubsection{Transmit Signal and Successive Decoding}
	Upon receiving the transmit signals and beamforming coefficients, the BSs construct the transmit signal $\bm{x}_b$ by
	\begin{align}
	    \bm{x}_b = \sideset{}{_{k\in\mathcal{K}}}\sum \bm{w}_{b,k}^p s_k^p + \sideset{}{_{k\in\mathcal{K}}}\sum \bm{w}_{b,k}^c s_k^c.\label{eq:x_n_def}
	\end{align}
	Thereby, the signal transmitted by each BS is subject to the maximum transmit power constraint denoted by\vspace{-.1cm}
	\begin{equation}\label{eq:pwr}
	    \mathbb{E}\{ \bm{x}_b^H \bm{x}_b \} = \sum_{{k \in \mathcal{K}}}\Big(\norm[\big]{\bm{w}_{b ,k}^{p}}_2^2 + \norm[\big]{\bm{w}_{b ,k}^{c}}_2^2\Big) \leq P_{b}^{\mathrm{max}}.
	\end{equation}
	
	
	
	In the RSMA framework, multiple users may decode each common message to reduce the interference for messages decoded afterwards. It is thus relevant to consider additional definitions of the network, which are provided as follows:
	
	\begin{itemize}
	    \item The set of users, which decode user $k$'s common message, is defined by $\mathcal{M}_k = \{j\in\mathcal{K}| \text{user } j \text{ decodes } s_k^c\}$.
	    \item The users, whose common messages are decoded by user $k$, are denoted in the set $\mathcal{I}_k = \{j\in\mathcal{K}| k\in \mathcal{M}_j \}$.
	    \item The decoding order at user $k$ is written as $\pi_k$, where $\pi_k(m) > \pi_k(i)$ means that user $k$ decodes common message $i$ before message $m$.
	    \item The set of users, whose common messages are decoded after decoding user $i$'s message at user $k$, become $\mathcal{I}'_{i,k} = \{ m\in\mathcal{I}_k | \pi_k(m) > \pi_k(i) \}$.
	\end{itemize}
	A suitable method of calculating $\mathcal{M}_k$, $\mathcal{I}_k$, $\mathcal{I}'_{i,k}$, and $\pi_k$ is provided by \cite{9445019}. To better illustrate the impact of the above parameters, consider following example: Let user $1$'s common message be decoded by user $1$ and $2$ and user $2$'s common message be decoded only by user $2$, i.e., we obtain $\mathcal{M}_1 = \{1,2\}$ and $\mathcal{M}_2 = \{2\}$. Consequently, we have $\mathcal{I}_1 = \{1\}$ and $\mathcal{I}_2 = \{1,2\}$, meaning that user $1$ decodes its own, and user $2$ decodes both its own and user $1$'s common message. The decoding orders could be $\pi_1 = \{1\}$ and $\pi_2 = \{2>1\}$, where the latter describes user $2$ decoding common message $1$ before its own message. Hence, $\mathcal{I}'_{1,1} = \varnothing$, $\mathcal{I}'_{2,2} = \varnothing$, and $\mathcal{I}'_{1,2} = \{2\}$.

	Based on the previous definitions, we obtain the received signal at user $k$ as
	\begin{align}
	    &y_k = \bm{h}_{k}^H\bm{w}_{k}^p {s}_{k}^p + \sum_{j \in \mathcal{I}_{k}}\bm{h}_{k}^H\bm{w}_{j}^c {s}_{j}^c + \sum_{m \in \mathcal{K}\setminus \{k\}}\bm{h}_{k}^H\bm{w}_{m}^p {s}_{m}^p \nonumber\\
	    &\qquad + \sum_{l \in \mathcal{K}\setminus \mathcal{I}_k}\bm{h}_{k}^H\bm{w}_{l}^c {s}_{l}^c+ n_k .\label{eq:y_k}
	\end{align}
	Here, $n_k \sim \mathcal{C}\mathcal{B}(0,\sigma^2)$ represents additive white Gaussian noise, assumed to have the same power for all users. In \eqref{eq:y_k}, the first two terms consist of private and common signals, which are decoded during the successive decoding. In contrast, the last three terms in \eqref{eq:y_k} denote interference from private signals, common signals, and noise, respectively. Using this definition, the signal to interference plus noise ratio (SINR) of user $k$ decoding its private message, i.e., $\Gamma_{k}^p$, and the common message of user $i$, i.e., $\Gamma_{i,k}^c$, are formulated respectively as
	\begin{subequations}
    \begin{align}
		\label{eq:e2.18}
		\Gamma_{k}^p &= \frac{\left|\bm{h}_{k}^{H}\bm{w}_{k}^p \right|^2}{\sum\limits_{j \in \mathcal{K}\setminus \{k\}}\left|\bm{h}_{k}^{H}\bm{w}_{j}^p \right|^2 + \sum\limits_{l \in \mathcal{K}\setminus \mathcal{I}_k}\left|\bm{h}_{k}^{H}\bm{w}_{l}^c \right|^2+\sigma^2},\\
		\label{eq:e2.19}
		\Gamma_{i,k}^c &=\frac{\left|\bm{h}_{k}^{H}\bm{w}_{i}^c \right|^2}{\sum\limits_{j \in \mathcal{K}}\left|\bm{h}_{k}^{H}\bm{w}_{j}^p \right|^2 + \sum\limits_{l \in \mathcal{K}\setminus \mathcal{I}_k\cup \mathcal{I}'_{k,i}}\left|\bm{h}_{k}^{H}\bm{w}_{l}^c \right|^2+\sigma^2 }.
	\end{align}
	\end{subequations}
	\subsubsection{Mixed Criticality-aware MSE Metric}
	To analyze the MSE of QoS deviation, i.e., the gap of desired and allocated rate, we optimize the metric $\Psi$. In this context, the allocated rates are dependent on different system states with different resource allocation, e.g., initial state, post-failure states. 
	Such metric is formulated mathematically as
	\begin{align}\label{eq:f}
	    &\Psi = \frac{1}{|\mathcal{K}|}\sum_{k\in\mathcal{K}}\left|\left(r_k^p + r_k^c\right) - r^{\mathrm{des}}_k\right|^2.
	\end{align}
	Note that \eqref{eq:f} takes different values at different times, as the resource allocation is updated in a resilient manner, e.g., see Fig.~\ref{metric}. 
	In \eqref{eq:f}, the mixed criticality of links is represented in $r^{\mathrm{des}}_k$, where critical applications have greater QoS requirements than others. 
	\begin{remark}
	Minimizing a function based on the MSE such as \eqref{eq:f} (alternatively based on the mean absolute error) contributes to finding a good trade-off between minimizing the resource usage and providing QoS. On the one hand, systems which can hardly fulfill the desired QoS will fall into a mode of maximizing each user's rate upon meeting those demands. On the other hand, as rate targets are met, no further enhancement of the rates is conducted, so as to avoid wasting resources which provides a form of energy efficiency.
	\end{remark}
	Note that an MSE-based metric as \eqref{eq:f} does not guarantee QoS fulfillment, especially in networks with insufficient resources. However, such objective results in a solution, which best approaches the (unreachable) target rates.
	
	\subsection{Problem Formulation and Convexification}\label{ssec:opt}
	This paper considers the problem of minimizing the constrained network-wide rate gap (QoS deviation) as initial problem, which can be formulated mathematically as follows:%
	\begin{subequations}\label{eq:Opt1}%
		\begin{align}
			\underset{\bm{w},\bm{r},\bm{\mathcal{K}}}{\mathrm{min}}\quad &\Psi  \tag{\ref{eq:Opt1}} \\
			\mathrm{s.t.} \quad &\eqref{eq:fthl}, \eqref{eq:pwr}, \nonumber\\
			& r_{k}^{p} \leq \tau\,\log_2(1+\Gamma_{k}^p),  &\forall k &\in \mathcal{K}, \label{eq:achp}\\	
			& r_{i}^{c} \leq \tau\,\log_2(1+\Gamma_{i,k}^c), &\forall i\in\mathcal{I}_k, \forall k &\in \mathcal{K}. \label{eq:achc}
		\end{align}
	\end{subequations} 
	The above problem minimizes the gap of assigned (private and common) rate to the desired rate by jointly managing the allocated rates, precoding vectors, and BS-user clustering. Hereby, the precoding and rate vector\vspace{-.2cm}
	\begingroup\addtolength{\jot}{-.1cm}
	\begin{align*}
	    \bm{w}&=\Big[\big(\bm{w}_{1}^p\big)^T,\ldots,\big(\bm{w}_{K}^p\big)^T,\big(\bm{w}_{1}^c\big)^T,\ldots,\big(\bm{w}_{K}^c\big)^T\Big]^T, \\
	    \bm{r}&=\big[r_{1}^p,\ldots,r_{K}^p,r_{1}^c,\ldots,r_{K}^c\big]^T,
	\end{align*}\endgroup
	and the clustering set $\bm{\mathcal{K}} = \{ \mathcal{K}_b^o | \forall b\in\mathcal{B}, \forall o\in\{p,c\} \}$	denote the optimization variables. The feasible set of problem \eqref{eq:Opt1} is defined by the fronthaul capacity constraints per BS \eqref{eq:fthl}, the maximum transmit power constraint per BS \eqref{eq:pwr}, the achievable rates per user \eqref{eq:achp} and \eqref{eq:achc}. The latter constraints, namely \eqref{eq:achp} and \eqref{eq:achc}, depend on the SINR defined in \eqref{eq:e2.18} and \eqref{eq:e2.19}, which is non-convex. In contrast, the objective function \eqref{eq:Opt1} is already in convex form, since the squared absolute value is convex. However, as the constraints \eqref{eq:achp} and \eqref{eq:achc} are non-convex, problem \eqref{eq:Opt1} is in general non-convex and difficult to solve directly. Therefore, this paper proposes various reformulation techniques to devise the optimization problem in a more tractable form.
	
	A few notes on problem \eqref{eq:Opt1}'s constraints. Constraint \eqref{eq:e2.19} takes the form of a multiple access constraint. Specifically, $r_i^c$ is the rate of user $i$'s common stream and the subset of users $\mathcal{M}_i$ are going to decode that specific signal. Therefore, $r_i^c$ is bounded by the lowest $\Gamma_{i,k}^c$, i.e., the lowest SINR of all decoding users. Also, the constraints \eqref{eq:fthl} and \eqref{eq:pwr} are highly dependent on the clustering sets $\mathcal{K}_b^p$ and $\mathcal{K}_b^c$, which are also subject to optimization. Hence, problem \eqref{eq:Opt1} has non-convex constraints and is thus non-convex. Therefore, we propose a convex reformulation of problem \eqref{eq:Opt1} in Theorem~\ref{th:1.convex} based on fractional programming, inner-convex approximation, and a $l_0$-norm approximation clustering approach. Before stating Theorem~\ref{th:1.convex}, consider the following definitions: Let $\bm{u}$ be an auxiliary variables vector defined as $\bm{u}=[(\bm{u}_{1}^p)^T,\ldots,(\bm{u}_{K}^p)^T,(\bm{u}_{1}^c)^T,\ldots,(\bm{u}_{K}^c)^T]^T$, consisting of $\bm{u}_k^o=[u_{1,k}^o,\ldots,u_{B,k}^o]^T$. Similarly, the auxiliary variable $\bm{\gamma}$ covers all possible non-zero elements of $\bm{\gamma}'=[{\gamma}_{1}^p,\ldots,{\gamma}_{K}^p,{\gamma}_{1,1}^c,{\gamma}_{1,2}^c,\ldots,{\gamma}_{K,K}^c]^T$, 
	as each common message is only decoded by a part of all users. 
	Let $\beta_{b,k}^o$ denote weights of the $l_0$-norm approximation. The functions $g^p(\bm{w},\bm{\gamma})$ and $g^c(\bm{w},\bm{\gamma})$ are defined as
	\begin{align}
		&g^p(\bm{w},\bm{\gamma}) = \gamma_{k}^p - 2 \,\mathrm{Re}\Big\{({\chi}_{k}^p)^* \left|\bm{h}_{k}^{H}\bm{w}_{k}^p \right|^2\Big\} + |\chi_{k}^p|^2 \nonumber\\
		&\qquad \cdot\left[ \sigma^2 + \sum\limits_{j \in \mathcal{K}\setminus \{k\}}\left|\bm{h}_{k}^{H}\bm{w}_{j}^p \right|^2 + \sum\limits_{l \in \mathcal{K}\setminus \mathcal{I}_k}\left|\bm{h}_{k}^{H}\bm{w}_{l}^c \right|^2 \right],\hspace{-0.1cm} \label{eq:qtp} \\
		&g^c(\bm{w},\bm{\gamma}) = \gamma_{i,k}^c - 2 \,\mathrm{Re}\Big\{(\chi_{i,k}^c)^* \left|\bm{h}_{k}^{H}\bm{w}_{i}^c \right|^2\Big\} + |\chi_{i,k}^c|^2\nonumber\\
		&\qquad\cdot \left[ \sigma^2 + \sum\limits_{j \in \mathcal{K}}\left|\bm{h}_{k}^{H}\bm{w}_{j}^p \right|^2 + \sum\limits_{l \in \mathcal{K}\setminus \mathcal{I}_k\cup \mathcal{I}'_{k,i}}\left|\bm{h}_{k}^{H}\bm{w}_{l}^c \right|^2 \right],\hspace{-0.1cm} \label{eq:qtc}
	\end{align}
	where ${\chi}_{k}^o$ denote auxiliary variables of the quadratic transform. At last, the reformulated fronthaul constraint \eqref{eq:fthl} is given as\vspace{-.1cm}
	\begingroup\addtolength{\jot}{-.12cm}
	\begin{align}
	    &\hspace{-.3cm}\sum_{{k \in \mathcal{K}}}\hspace{-.1cm} \Big(\big(u_{b,k}^p+r_{k}^{p}\big)^2-2 \big(\tilde{u}_{b,k}^p-\tilde{r}_{k}^{p}\big)\big(u_{b,k}^p-r_{k}^{p}\big)+\big(\tilde{u}_{b,k}^p-\tilde{r}_{k}^{p}\big)^2 \nonumber\hspace{-2.5cm}\\
		&\qquad+ \big(u_{b,k}^c+r_{k}^{c}\big)^2-2 \big(\tilde{u}_{b,k}^c-\tilde{r}_{k}^{c}\big)\big(u_{b,k}^c-r_{k}^{c}\big)+\nonumber\hspace{-2.5cm}\phantom{\Big)}\\ &\qquad+\big(\tilde{u}_{b,k}^c-\tilde{r}_{k}^{c}\big)^2\Big) \leq 4C_n^{\mathrm{max}},\hspace{-2.5cm}  &\forall b &\in \mathcal{B}, \label{eq:fthl2}
	\end{align}\endgroup
    where $\tilde{\bm{u}}$ and $\tilde{\bm{r}}$ are feasible fixed values, which originate from the inner-convex approximation.
	\begin{theorem}\label{th:1.convex}
	    Problem \eqref{eq:Opt1} can be transformed into the following convex reformulation based on fractional programming, inner-convex approximation, and $l_0$-norm approximation:
	    \begin{subequations}\label{eq:Opt2}
		\begin{align}
			\underset{\bm{w},\bm{r},\bm{u},\bm{\gamma}}{\mathrm{min}}\quad &\Psi \tag{\ref{eq:Opt2}} \\
			\mathrm{s.t.} \quad &\eqref{eq:pwr}, \eqref{eq:fthl2}, \nonumber\\
			& r_{k}^{p} \leq \tau\,\log_2(1+\gamma_{k}^p),  &\forall k &\in \mathcal{K}, \label{eq:achp2}\\	
			& r_{i}^{c} \leq \tau\,\log_2(1+\gamma_{i,k}^c), &\forall i\in\mathcal{I}_k, \forall k &\in \mathcal{K},\label{eq:achc2}\\
			& g^p(\bm{w},\bm{\gamma}) \leq 0,  &\forall k &\in \mathcal{K}, \label{eq:gpprob}\\	
			& g^c(\bm{w},\bm{\gamma}) \leq 0, &\forall i\in\mathcal{I}_k, \forall k &\in \mathcal{K},\label{eq:gcprob}\\
			& \beta_{b,k}^p \, \norm[\big]{\bm{w}_{b ,k}^{p}}_2^2 \leq u_{b,k}^p,\; \beta_{b,k}^c \, \norm[\big]{\bm{w}_{b ,k}^{c}}_2^2 \leq u_{b,k}^c,\nonumber\hspace{-2.65cm}\\
			& &\forall b\in\mathcal{B}, \forall k &\in \mathcal{K}.\label{eq:betacon}
		\end{align}
	\end{subequations}
	\end{theorem}
	\begin{proof}
	    For a detailed derivation, we refer to Appendix \ref{app3}.
	\end{proof}
	\ifarxiv%
	An efficient procedure to solve the resource allocation problem of the considered network, i.e., problem \eqref{eq:Opt2}, can be found in Appendix~\ref{appB}.
	\else%
	An efficient procedure to solve the resource allocation problem of the considered network, i.e., problem \eqref{eq:Opt2}, can be found in Appendix B of the extended version of the paper available on arxiv \cite{comebackkid}.
	\fi%
	With the above considerations at hand, problem \eqref{eq:Opt2} is solved as an initial resource management step. The objective function defined in \eqref{eq:f} provides the necessary means for mixed criticality-awareness of the resource allocation. Additionally, in networks with sufficient resources, e.g., fronthaul capacity, problem \eqref{eq:Opt2} yields a robust solution, providing high levels of \emph{absorption}. This is due to the fact that the rates are allocated to meet the QoS, while the actual achievable rates might be much higher, i.e., so called SINR margins. Also, network participants are multi-connected to multiple BSs. In resource limited networks, however, i.e., fronthaul capacity limited regimes, a solution based on problem \eqref{eq:Opt2} yields worse robustness, e.g., due to single-link connections, but achieves a best-effort QoS fulfillment solution. In other words, QoS targets might not be fully satisfied, whereas, due to the nature of $\Psi$ \eqref{eq:f}, the priority lays in meeting high critical targets.
	\subsection{Resilient and Criticality-aware Resource Allocation}\label{ssec:resmixcritalloc}
    In this work, we utilize four different resilience mechanisms, which differ in calculation time and quality of recovery. Hence, the four algorithms can run in parallel to sequentially recover from the outage. 
	\paragraph{Resiliency Mechanism 1 (M1)} As first mechanism in line, \emph{rate adaption} is used.
	Hereby, the algorithm calculates the achievable rates of all users, based on the new SINR measured after occurrence of the failure, while the beamformers are kept fixed. 
	Thereby, from comparing allocated and achievable rate, the algorithm determines which users experience outage. The rate of these users is then set to the new achievable rate, which will be lower than the previously allocated rate and, in fact, might be zero. By doing so, M1 recovers the communication links to users, which otherwise could not decode their messages after outage due to unadjusted rates. 
	\emph{Rate adaption} is the resilience mechanism of the lowest complexity ($\mathcal{O}(1)$) and thus, also the fastest. However, it is also the weakest recovery mechanism, i.e., most recovered rates will only be a fraction of the users desired rate. We associate $e_\mathrm{ada}^{(1)}$ and $e_\mathrm{rec}^{(1)}$, i.e., the adaption and recovery metrics, with this mechanism.
	
	
	\paragraph{Resiliency Mechanism 2 (M2)} As second resilience mechanism in line, \emph{beamformer adaption} is employed.
	This scheme solves a reduced-complexity version of problem \eqref{eq:Opt2} and thereby calculates new beamforming coefficients while keeping the clustering fixed. The utilization of a previous feasible solution to initialize the procedure reduces the time to convergence. 
	This mechanism has an intermediate complexity ($d_2={K}(2{B}{L}+{K}+3)$) but, compared to M1, leads to a higher recovered QoS. In comparison, this mechanism manages the interference more efficiently utilizing spatial dimensions. We associate $e_\mathrm{ada}^{(2)}$ and $e_\mathrm{rec}^{(2)}$.
	
	\paragraph{Resiliency Mechanism 3 (M3)} This mechanism is referred to as \emph{BS-user-allocation adaption} and repeats the network's clustering using the updated CSI by solving a generalized assignment problem. For more details on solving the generalized assignment problem%
	\ifarxiv%
	, we refer to Appendix~\ref{app4}.
	\else%
	, we refer to Appendix C of the extended version of the paper available on arxiv \cite{comebackkid}.
	\fi%
	Afterwards, the remaining mechanism operates analog to M2. This mechanism has a high complexity ($d_{3,1}={K}{B}$ and $d_{3,2}={K}(2{B}{L}+{K}+3)$) but offers good-quality recovered QoS, we associate $e_\mathrm{ada}^{(3)}$ and $e_\mathrm{rec}^{(3)}$.
	
	\paragraph{Resiliency Mechanism 4 (M4)} At last, we employ \emph{comprehensive adaption}, which repeats the solution to problem \eqref{eq:Opt2}, solving the network's clustering, beamforming, power control, and rate allocation jointly using the updated CSI. This mechanism has a very high complexity ($d_4={K}(2{B}({L}+1)+{K}+3)$) and thus it is rather optimistic to use this technique in practice. However, this adaption can either be seen as an mechanism for networks with slowly changing channels or as bound and approximate for other resilience strategies. This yields $e_\mathrm{ada}^{(4)}$ and $e_\mathrm{rec}^{(4)}$.
	
	\begin{figure*}[t]
	\centering
	\begin{subfigure}[t]{0.32\textwidth}
		\centering
	    \includegraphics[width=1.05\linewidth]{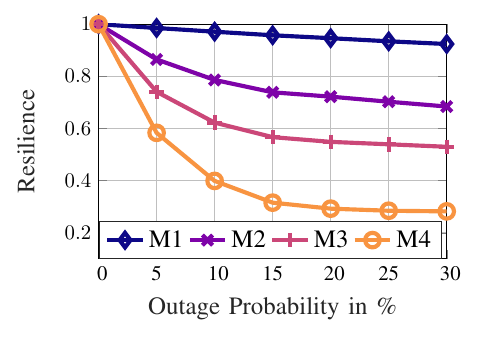}\vspace{-.4cm}
	    \caption{Time-to-recovery under focus, $\boldsymbol{\lambda} = [0.2,\;0.0,\;0.8]$.}
	    \label{fig:yRes_xOut_allM_lambda_02_00_08}
	\end{subfigure}\hfill
	\begin{subfigure}[t]{0.32\textwidth}
		\centering
	    \hspace{.2cm}\includegraphics[width=.95\linewidth]{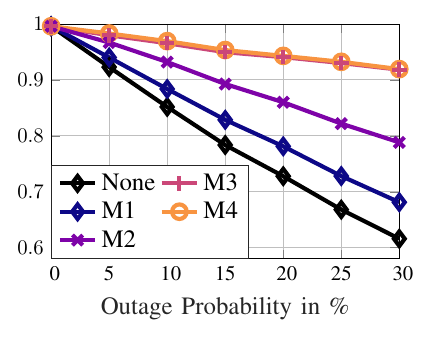}\vspace{-.4cm}
	    \caption{Quality-of-adaption under focus, $\boldsymbol{\lambda} = [0.2,\;0.8,\;0.0]$.}
	    \label{fig:yRes_xOut_allM_lambda_02_08_00}
	\end{subfigure}\hfill
	\begin{subfigure}[t]{0.32\textwidth}
		\centering
	    \includegraphics[width=.95\linewidth]{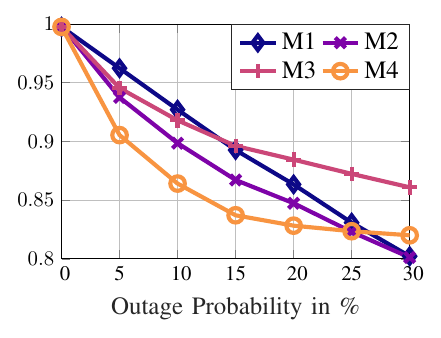}\vspace{-.4cm}
	    \caption{Trade-off adaption and recovery, $\boldsymbol{\lambda} = [0.2,\;0.4,\;0.4]$.}
	    \label{fig:yRes_xOut_allM_lambda_02_06_02}
	\end{subfigure}\vspace{-.2cm}
	\caption{Resilience over outage probability comparing the mechanisms.} \label{yRes_xOut_allM}\vspace{-.6cm}
    \end{figure*}
    \begin{figure*}[t]
	\centering
	\begin{subfigure}[t]{0.48\textwidth}
		\centering
	    \includegraphics[width=.9\linewidth]{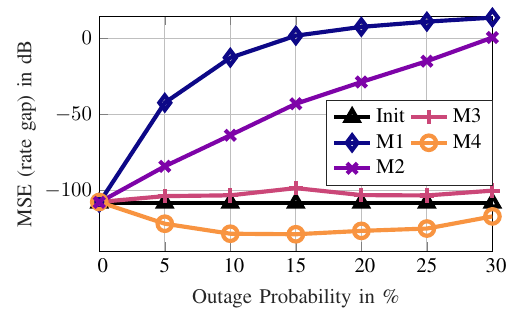}\vspace{-.4cm}
	    \caption{MSE in dB comparing initial solution and all mechanisms.}
	    \label{fig:yMSE_xOut_allM}
	\end{subfigure}
	\hfill
	\begin{subfigure}[t]{0.48\textwidth}
	\centering
	    \includegraphics[width=.9\linewidth]{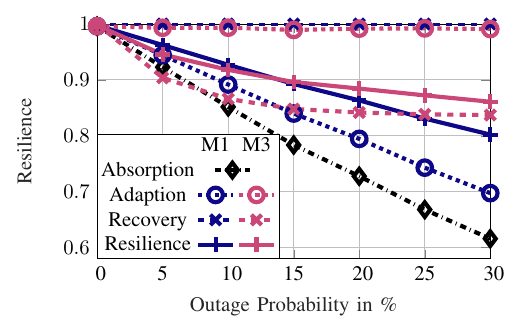}\vspace{-.4cm}
	    \caption{Resilience components of M1 and M2, with $\boldsymbol{\lambda} = [0.2,\;0.4,\;0.4]$.}
	    \label{fig:yRes_xOut_M13_lambda_02_06_02v2}
	\end{subfigure}\vspace{-.2cm}
	\caption{Resilience and MSE over outage probability comparing various mechanisms.} \label{yResMSE_xOut}\vspace{-.6cm}
    \end{figure*}%
    
	The full mechanism, which detects outages and then applies the resilience mechanisms is referred to as \emph{Resilient and Criticality-aware RSMA Resource Management} algorithm and can be found in Algorithm~\ref{AlgCen}. The algorithm gets as input the desired rates of all user, as these rates are typically specified by the users. 
	First, the initial solution resource allocation strategy is computed by solving problem \eqref{eq:Opt2}. 
	For each transmission, the algorithm checks if an outage occurs by checking if the new achievable rate of any user is smaller than the allocated rate. Note that $\varepsilon$ refers to a small tolerance threshold value to increase robustness. If an outage or failure happens, 
	the algorithm sequentially executes all four resilience mechanism, namely \emph{rate adaption},  
	\emph{beamformer adaption}, \emph{BS-user-allocation adaption}, and \emph{comprehensive adaption}, as described previously. Note that the algorithm stops as soon as any mechanism achieves the desired functionality, otherwise it finished after \emph{comprehensive adaption} ends. Thereby, we assume that no second failure occurs while executing the recovery mechanisms. Afterwards, the algorithm continues checking each transmission for outages.

    \begin{algorithm}[h]
    \caption{Resilient and Criticality-aware RSMA Resource Management}
    \begin{algorithmic}[1]
    \REQUIRE desired rate $r_k^\mathrm{des}$ $\forall k\in\mathcal{K}$
	\STATE $\bm{r}$, $\bm{w}$, $\bm{u}$, $\bm{\gamma}$ $\gets$ solution from \eqref{eq:Opt2}
	\WHILE{true}
	\STATE $t_0$, $\bm{r}(t_0)$ $\gets$ latest transmission time and achievable rate
	\STATE\textit{$\triangleright$\ Obtain channel outage information}
	\IF{$\exists k: r_k(t_0) < r_k-\varepsilon$}
	\STATE\textit{$\triangleright$\ Resiliency Mechanism 1 (rate adaption)}
	\STATE $\bm{r}\gets$ new achievable rates of \eqref{eq:achp} and \eqref{eq:achc}
	\STATE\textit{$\triangleright$\ Resiliency Mechanism 2 (beamformer adaption)}\vspace*{-.05cm}
	\STATE $\bm{r}$, $\bm{w}$ $\gets$ new solution to \eqref{eq:Opt1}, starting from the previously feasible solution\\ \vspace*{-.05cm}
	\STATE\textit{$\triangleright$\ Resiliency Mechanism 3 (BS-user-allocation adaption)}\vspace*{-.05cm}
	\STATE $\mathcal{K}_b^p$, $\mathcal{K}_b^c$ $\forall b\in\mathcal{B}$ $\gets$ new solution of clustering in  \eqref{eq:GAP}
	\STATE $\bm{r}$, $\bm{w}$ $\gets$ new solution to \eqref{eq:Opt1} \vspace*{-.05cm}
	\STATE\textit{$\triangleright$\ Resiliency Mechanism 4 (comprehensive adaption)}\vspace*{-.05cm}
	\STATE $\bm{r}$, $\bm{w}$, $\bm{u}$, $\bm{\gamma}$ $\gets$ new solution to \eqref{eq:Opt2} \vspace*{-.05cm}
	\ENDIF
	\ENDWHILE
    \end{algorithmic}
    \label{AlgCen}
    \end{algorithm}%
    
	\begin{remark}
	    When having the choice between multiple resilience mechanism, the resilience mechanisms with the fastest computation time should be scheduled first, while the ones leading to the highest recovery quality should be scheduled last. Thereby, it depends on the different computation times and recovery qualities, if it is optimal to use all mechanisms, or just a few of them.
	\end{remark}\vspace{-.4cm}
	
    \subsection{Numerical Simulation}\label{ssec:sim}
	To evaluate the performance of the proposed methods, we conduct numerical simulations in this section. We consider a network over a square area of $\SI{800}{\meter}$ by $\SI{800}{\meter}$, in which BSs and users are placed randomly. Each BS is equipped with $L=2$ antennas and has a maximum transmit power of $P_b^\mathrm{max} = \SI{28}{dBm}$. Unless mentioned otherwise, we fix $C_b^\mathrm{max}=\SI{50}{Mbps}$, the number of BS $B=6$, and the number of users $K=14$. Regarding M3, the assignment problem related parameters are set as $B_k^\mathrm{max} = 2$ and $K_n^\mathrm{max} = 2\cdot K$. 
	We consider a channel bandwidth of $\tau = \SI{10}{MHz}$, and a path-loss model given by $\mathrm{PL}_{b,k}= 128.1 + 37.6\cdot\mathrm{log}_{10}(d_{b,k})$, where the distance of user $k$ and BS $b$ is denoted as $d_{b,k}$. Additionally, we consider log-normal shadowing with $8$ dB standard deviation and Rayleigh fading with zero mean and unit variance. The noise power spectral density is $\SI{-168}{dBm/Hz}$. Unless mentioned otherwise, we set the mixed-critical QoS demands $r_k^\mathrm{des}$ to $\SI{12}{Mbps}$, $\SI{8}{Mbps}$, and $\SI{4}{Mbps}$, for $4$ random users each, respectively. This corresponds to three criticality levels, namely high (HI), medium (ME), and low (LO).
	
	\begin{figure*}[t]
	\centering
	\begin{subfigure}[t]{0.48\textwidth}
	    \centering
	    \includegraphics[width=.9\linewidth]{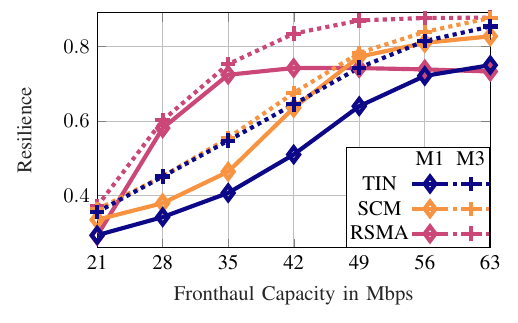}\vspace{-.4cm}
	    \caption{Resilience of TIN, SMC, and RSMA for M1 and M3.}
	    \label{fig:yRes_xFthl_M14_lambda_03_065_005}
	\end{subfigure}\hfill
	\begin{subfigure}[t]{0.48\textwidth}
	    \centering
	    \includegraphics[width=.9\linewidth]{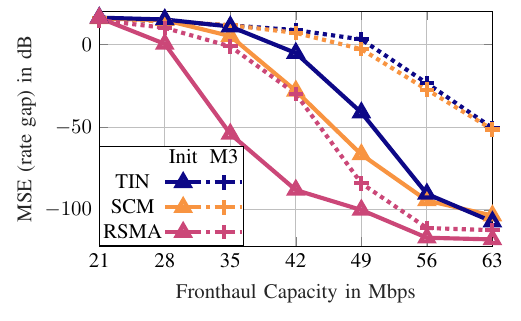}\vspace{-.4cm}
	    \caption{MSE in dB comparing initial solution and M3.}
	    \label{fig:yMSE_xFthl_MI4}
	\end{subfigure}\vspace{-.2cm}
	\caption{Resilience and MSE over fronthaul capacity comparing interference management schemes.} \label{yResMSE_xFthl}\vspace{-.6cm}
    \end{figure*}%
	\subsubsection{Resilience Components}
	At first, we aim at comparing the proposed mechanisms in term of resilience and analyzing the individual components of resilience, i.e., absorption, adaption, recovery, and how the combine.\\
	\indent
	In the first set of simulations, we compare the performance of the four considered schemes in terms of resilience, i.e., $e^{(n)}$ as per \eqref{eq:f}, where $n\in\{1,\cdots,4\}$. Fig.~\ref{yRes_xOut_allM} shows three plots from the same data set, differing in the exact weightings $\boldsymbol{\lambda}$ in \eqref{eq:f}. That is, Fig.~\ref{fig:yRes_xOut_allM_lambda_02_00_08} shows the resilience when the \emph{recovery} (time-to-recovery) is weighted most, i.e., systems where timely recovery is valued over all other aspects, over the channel outage probability. As expected, M1 and M2 provide the fastest recovery due to inhabiting the lowest complexity, however, this result does not capture the quality-of-adaption. The resilience of M3 and M4 decreases with increasing outage probability. Interestingly, we observe a stabilization at approximately $15$\%, implying that this parameter has only slight impact on the calculation time of M3 and M4. In contrast, M2 and especially M1 are steadily decreasing with rising outage probability. This behavior comes due to the fact that M1's computation time is strongly coupled to the amount of users experiencing outages. M2 is in part dependent on the quality of the previously feasible solution, and with increasing outages, the previous solution looses its value as an initial point.\\
	In Fig.~\ref{fig:yRes_xOut_allM_lambda_02_08_00}, the resilience for setups, which value \emph{adaption} most, is plotted over outage probability. Results emphasize the quality-of-adaption difference of the mechanisms, as M1 performs worst, M2 medium, and M3 as well as M4 best. 
	Interestingly, we observe M3 closely approaching M4, which, for this network setup, highlights M3's superiority, as it is less complex than M4 providing almost the same \emph{adaption}. Overall, Fig.~\ref{fig:yRes_xOut_allM_lambda_02_08_00} underlines the need for resilience, especially in networks where channel outages occur, i.e., all realistic setups, as all mechanisms outperform the \emph{no-reaction} baseline (None), where no resilience action is taken.\\
	A combination of both results is shown in Fig.~\ref{fig:yRes_xOut_allM_lambda_02_06_02}. We show the resilience versus the outage probability for a reasonable trade-off weight vector. It comes clear that a dynamic switching between M1 and M3 provides the best resilience over the considered outage probabilities. While at the lower probabilities, such scheme mostly benefits from M1's quick recovery, at higher values, M3's quality-of-adaption are most beneficial.\\
    \indent    
    From Fig.~\ref{fig:yMSE_xOut_allM}, we get a different point of view, as the figure shows the objective function \eqref{eq:Opt1}, i.e., the MSE in dB as a function of outage probability. The initial performance (Init) depicts the performance baseline of the algorithm for $0$\% outage probability. While M1 and M2 are significantly affected by the rising outages, M3 performs close to and M4 even outperforms Init. This emphasizes the fact that a re-clustering approach as M3 can almost keep the initial performance level facing outages. The performance gap between M3 and M4 can be seen as a dynamic clustering benefit, as such approach can lower the MSE even further. Note that outages do not solely impact active channels, channels that cause interference (e.g., unassociated BS and user) are also subject to outage. Thus, M4 can beat the initial performance facing outages.\\
	The individual aspects of the proposed resilience metric are shown in Fig.~\ref{fig:yRes_xOut_M13_lambda_02_06_02v2} for M1 and M3. Here, absorption, adaption, recovery, and the resulting total resilience are plotted. Consistent with previous observations, M1 has a better time-to-recovery, yet worse adaption and thus worse total resilience, than M3. Note that the herein considered results are highly dependent on the weight vector $\boldsymbol{\lambda}$, which has to be tailored to the needs of different services, networks, and providers.
	
	\ifarxiv
	\begin{figure*}[t]
	\centering
	\begin{subfigure}[t]{0.48\textwidth}
	    \centering
	    \includegraphics[width=.9\linewidth]{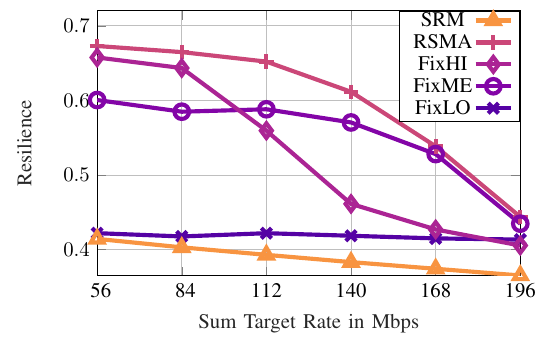}\vspace{-.4cm}
	    \caption{Resilience of M4, $\boldsymbol{\lambda} = [0.2,\;0.5,\;0.3]$.}
	    \label{fig:yRes_xSumQOS_M5_lambda_02_05_03}
	\end{subfigure}\hfill
	\begin{subfigure}[t]{0.48\textwidth}
	    \centering
	    \includegraphics[width=.9\linewidth]{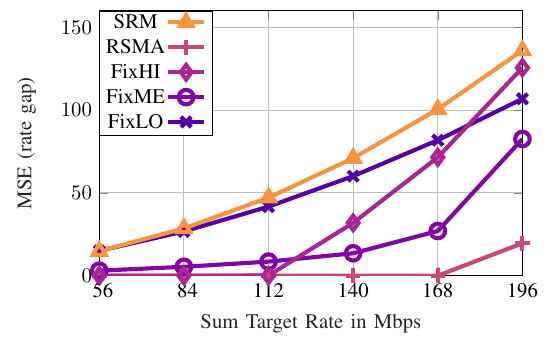}\vspace{-.4cm}
	    \caption{MSE performance for M4.}
	    \label{fig:yMSE_xSumQOS_MI5}
	\end{subfigure}\vspace{-.2cm}
	\caption{Resilience and MSE over QoS requirements comparing criticality schemes.} \label{yResMSE_xSumQOS}\vspace{-.6cm}
    \end{figure*}
    \fi
	\subsubsection{Impact of Interference Management Techniques}
	In addition to the herein proposed scheme, referred to as RSMA, we consider $2$ different reference schemes for interference management, namely treating interference as noise (TIN) and a single common message (SCM)-based RSMA scheme. TIN does not consider any rate-splitting capabilities, i.e., TIN is less complex than RSMA, yet offers less opportunities for resilience, e.g., redundant and diverse message streams. Contrary to that, the SCM scheme employs one-layer rate-splitting, where one \emph{super common} stream (additional message stream) is decoded by all users in addition to private messages, e.g., used in \cite{8846706,7513415}. SCM requires each user to decode two messages, therefore, has higher complexity than TIN, but lower complexity than RSMA. Hence, such complexity differences have impact on the time-to-recovery $e_\mathrm{rec}^{(n)}$, as computing times differ. As a highlight of the proposed resilience metric, it captures these computation time differences numerically within the overall resilience metric \eqref{eq:f}. Thereby, the comparison of RSMA, SCM, and TIN in terms of resilience is fair by nature, as it includes complexity differences.\\
    \indent	
	Fig.~\ref{fig:yRes_xFthl_M14_lambda_03_065_005} shows the resilience of M$1$ and M$3$ using TIN, SCM, and RSMA as a function of the fronthaul capacity for an outage probability of $25$\%, with $\boldsymbol{\lambda} = [0.3,\;0.65,\;0.05]$. Observing M1, all three plots have a similar starting point at ${C}_b^{\text{max}} = \SI{21}{Mbps}$ and saturate at higher capacities. Especially in the medium fronthaul regime, i.e., mostly relevant in practice, (here, ${28-49}$~Mbps), RSMA provides significant resilience enhancements compared to TIN and SCM. Similarely, SCM is able to outperform TIN in every point of the x-axis. This result highlights the potential gain from using rate-splitting-based schemes for ensuring resilience of communication networks. In the higher fronhaul regime, where QoS targets are easily achievable, RSMA is outperformed by SCM (and TIN), as time-to-recovery gains importance over quality-of-adaption, especially for M1, which is the fastest mechanism. Considering M3, we observe that RSMA outperforms SCM and TIN for all fronthaul capacities, emphasizing the gain of the proposed scheme in terms of resilience.\\
	To gain insights into the algorithm's behavior in terms of rate gap, i.e., MSE objective \eqref{eq:f}, Fig.~\ref{fig:yMSE_xFthl_MI4} plots $\Psi$ in dB as a function of ${C}_b^{\text{max}}$ for the initial solution (Init), i.e., performance before outage, and M3. RSMA's initial performance is able to provide the lowest MSE among all interference management schemes, saturating at high fronthaul capacity, as QoS targets are already met up to a minimal error, which explains the reduced gap towards TIN and SCM. Interestingly, observing M3's MSE performance, it is noteable that RSMA enables the mechanisms to closely approach the MSE performance before outage. In contrast, TIN and SCM experience a large gap between Init and M3, highlighting RSMA's resilience enhancing capabilities.
    
    
    \ifarxiv
	\subsubsection{Impact of Mixed Criticality}
	In this contribution, the concept of mixed criticality is utilized to express the diverse applications (and their physical layer demands) in a future 6G wireless communication networks. As the exact specifications of QoS targets within such system significantly impact the system performance, we herein assess the resilience and MSE as a function of the sum target rate, i.e., $\sum_{k\in\mathcal{K}}r_k^\mathrm{des}$. In accordance with the previous simulations, we respect the criticality levels, i.e., we start with $9$, $6$, $\SI{3}{Mbps}$ for HI, ME, and LO levels, respectively. For each point on the x-axis, we increment the LO level by $\SI{1}{Mbps}$ keeping the ME (HI) level at double (triple) the LO level. For reference, we implement four schemes with different approaches treating the criticality level: The first three schemes discard the difference of criticalities, i.e., the \emph{mixed}-aspect, setting all QoS demands to the same value, i.e., $(1)$ FixHI to HI, $(2)$ FixME to ME, and $(3)$ FixLO to LO criticality level; the fourth scheme discards the overall mixed criticality aspect, ignoring QoS demands, such that the algorithm corresponds to a sum rate maximization (SRM).
	
	For an outage probability of $30$\%, Fig.~\ref{fig:yRes_xSumQOS_M5_lambda_02_05_03} shows the resilience as a function of $\sum_{k\in\mathcal{K}}r_k^\mathrm{des}$, comparing the schemes. It is first noted that RSMA outperforms the references for all considered QoS targets, showing the superior resilience of the proposed schemes and emphasizing the idea of considering mixed criticality within such network. At low values for $\sum_{k\in\mathcal{K}}r_k^\mathrm{des}$, i.e., easy to fulfill QoS demands, FixHI performs better than FixME (and close to RSMA), as the algorithm can achieve most QoS targets.
	Between $84$ and $\SI{112}{Mbps}$, FixME outpaces FixHI, which is reasonable since FixME is a plausible compromise between setting all criticalities to LO or HI. However, this scheme does not beat the RSMA performance. FixLO performs worst, as the majority of users' demands are unfulfilled. Especially in the high QoS regime, SRM performs worst, as the consideration of maximizing the sum of all rates is not sufficient in terms of QoS demands and mixed criticality. Thus, these results show the importance of not only considering the criticality levels of the network, but also the mixed criticality aspect.
	
	A similar observation can be made from Fig.~\ref{fig:yMSE_xSumQOS_MI5}, which plots $\Psi$, i.e., the MSE, over $\sum_{k\in\mathcal{K}}r_k^\mathrm{des}$. SRM and FixLO are not able to support the network needs, as they result in high values for $\Psi$. FixHI, whilst achieving excellent MSE at low QoS demands, suffers severe losses in high demand regime. Thereby, these simulations highlight the explicit gain of the proposed mixed criticality-enabled RSMA paradigm in terms of minimizing rate gap in networks with high QoS demand, i.e., 6G networks.
    \fi
    \subsubsection{Longer Scale Resilience}
	In these set of simulations, we consider only the proposed scheme and investigate its performance as a function of time facing subsequent outage events, which increase in their severity over time. In more details, the first event denotes channel outages with $10$\% probability, and all following events have additional $10$\% probability each. The times of these outage events (upper plot), the throughput, which is defined as the sum of allocated rates that are achievable, the sum QoS target, and the adaption events (middle plot), as well as the momentarily resilience (lower plot) are shown in Fig.~\ref{ySR_yRes_xTime_RSMA_lambda_03_065_005v2}. Plenty of observations can be made:
	Algorithm~\ref{AlgCen} is able to recover the throughput completely (after some time) for events up to $70$\% channel outage probabilities. While the throughput is able to be restored (for events $1$-$6$), the resilience does never return to $1$, which comes from the time-to-recovery component. Regardless of the severity of outages, M1 offers outstanding time-to-recovery while also providing good adaption. Especially at the earlier events, consider the gap of resilience at the time of outage and at the time of M1's adaption: This gap is larger than at other mechanisms. While for lower outage probabilities, the first two mechanisms provide the best throughput adaption, and thus resilience, in high outage regime, M3 and M4 dominate the achievable adaption and the resilience quality. This results captures once again the trade-off between quality and time of adaption, which the proposed resiliency metric includes. Interestingly, even in the face of $90$\% outages, especially the \emph{BS-user-allocation adaption} is able to recover the throughput up to around $\SI{70}{Mbps}$, which is more than $50$\% of the QoS target. This results further emphasizes the suitability of C-RAN, especially due to the great amount of links between BSs and users, for resilient networks, as C-RAN provides great absorption and adaption potential.
    \begin{figure}
		\centering
		\includegraphics[width=\linewidth]{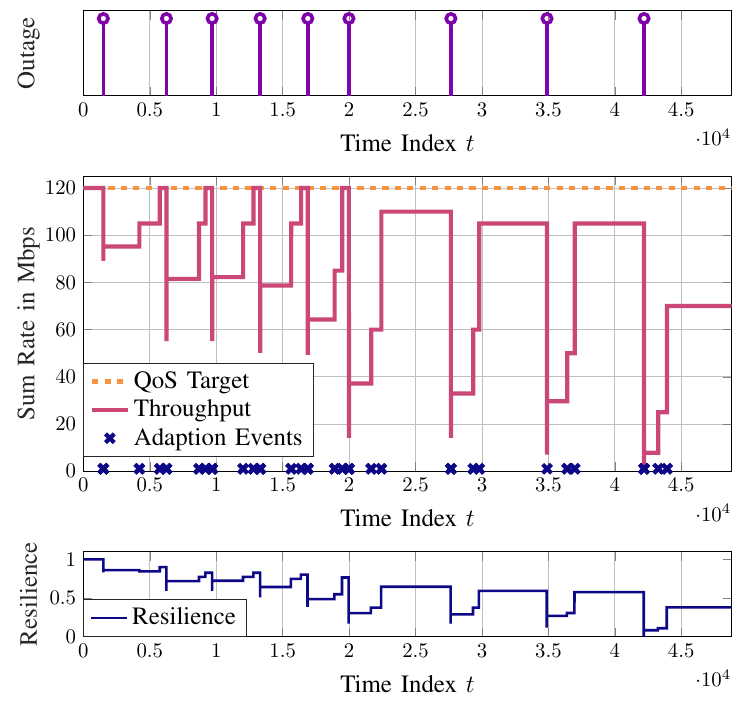}\vspace*{-.4cm}
		\caption{Throughput, adaption events, and resilience over the time index $t$ facing subsequent outage events.}
		\label{ySR_yRes_xTime_RSMA_lambda_03_065_005v2}
		\vspace*{-.75cm}
	\end{figure}%
	
    \subsubsection{Data Rate Performance} In the last set of simulations, for the ease of presentation, we consider a small network with $B=2$ BSs, $K=3$ users, and an outage probability of $60$\%, whereas users $1,2,3$ represent HI, ME, and LO critical levels, with $r_k^\mathrm{des}\in\{5,10,15\}\SI{}{Mbps}$. Fig.~\ref{zDisTime_yRTE_xUSR_RSMA_lambda_03_065_005_v2} depicts each users private and common data rate in Mbps for each discrete time point of the resilience procedure. In other words, Fig.~\ref{zDisTime_yRTE_xUSR_RSMA_lambda_03_065_005_v2} shows an example recovery process, similar to Fig.~\ref{ySR_yRes_xTime_RSMA_lambda_03_065_005v2} for a single outage event, but with more details in terms of user rates and RSMA message split. From the initial solution, which shows a balanced allocation of private and common rate for each user, the network experiences severe performance degradation after the outage. Users $1$ and $2$ degrade to zero-rate and user $3$ remains connected only by the common message. The fast M1 is able to recover parts of the rates, significantly boosting the performance back up, whereas mostly common rates can be restored. Consequently, M2 is able to restore the rate to a level similar to before the outage. At this point, Algorithm~\ref{AlgCen} may choose to abort the adaption process, dependent on the computational resources. However, by re-clustering, M3 changes the rate allocation towards enabling only private messages. This interesting behavior can be explained by the loss of about $60$\% of links, including interference links. In the new situation, common message decoding, as per the proposed RSMA scheme, is not needed due to the decreased interference. At last, M4 does not influence the resource allocation in this case. 
    These results, albeit missing the timing (time-to-recovery) information, show the high dynamics of private and common rate during the adaption process, which emphasize the promising factor that RSMA provides for high levels of resilience within a mixed-critical network.\vspace{-.1cm}
    \begin{figure}
		\centering
		\includegraphics[width=\linewidth]{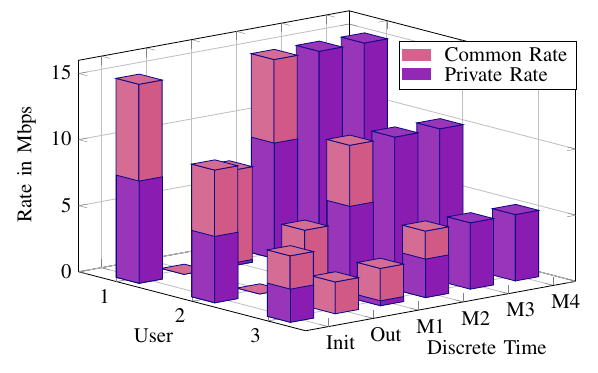}\vspace*{-.4cm}
		\caption{Rates over consequent discrete time points, namely, initial allocation, outage rate, recovered rates from M1-M4.}
		\label{zDisTime_yRTE_xUSR_RSMA_lambda_03_065_005_v2}
		\vspace*{-.6cm}
	\end{figure}%

	\section{Conclusion}\label{sec:con}\vspace*{-.1cm}
	Serving the needs of mixed-critical functional safety in future 6G networks is reliant on modern resource management techniques not only respecting the mixed criticality aspect but also providing high levels of resilience, to enable fault tolerance and ensuring safety for operating personnel, environment, and machinery. This contribution proposes a metric for jointly incorporating mixed criticality and resilience for the physical layer resource management respecting diverse QoS targets. This paradigm is investigated in a RSMA-enabled C-RAN subject to maximum transmit power, fronthaul capacity, and achievable private and common message rate constraints. An efficient resource management algorithm was derived using fractional programming, inner-convex approximation, and $l_0$-norm approximation. Building upon these considerations, four resiliency mechanisms differing in complexity and expected quality-of-adaption are proposed and combined into a resilience and criticality-aware RSMA-enabled resource management algorithm. Various numerical results show the dynamics of the proposed metric, the resiliency and MSE behavior, 
	\ifarxiv%
	impacts of different criticality schemes, 
	\fi%
	the resilience over consecutive outage events, and an explicit recovery procedure. Results show that there is a trade-off between the optimal resilience mechanism and the outage probability, RSMA outperforms reference interference management schemes, and the consideration of mixed criticality is a crucial factor in the considered system model. A simulation of the algorithm's behavior over time verifies the practical and numerical merits of the proposed scheme, and emphasizes its importance for future 6G wireless communication networks.
	\appendices
    \section{Proof of Theorem~\ref{th:1.convex}}\label{app3}
    \subsubsection{Inner-convex and $l_0$-norm approximation}
    Problem \eqref{eq:Opt1} is a mixed-integer non-convex optimization problem due to the variables $\bm{\mathcal{K}}$ and fronthaul constraint \eqref{eq:fthl}, reformulated as
    \begingroup\addtolength{\jot}{-.4cm}
    \begin{align}
        \sum_{{k \in \mathcal{K}}} \big(\norm[\big]{\norm[\big]{\bm{w}_{b,k}^p}_2^2}_0\, r_{k}^{p} + \norm[\big]{\norm[\big]{\bm{w}_{b,k}^c}_2^2}_0\, r_{k}^{c}\big) \leq C_b^{\mathrm{max}}, \quad \forall b\in\mathcal{B}. \nonumber\\
        \label{eq:fthl3}
    \end{align}\endgroup
    The discrete $l_0$-norm in \eqref{eq:fthl3} helps formulating the constraint in such a way, that a possible transmission of user $k$'s signal over the fronthaul link of BS $b$ is indicated by the entries of the precoding vector of $k$'s signal at $b$. In more details, $\norm[\big]{\norm[\big]{\bm{w}_{b,k}^p}_2^2}_0 = 1$, if and only if $b$ allocates some power towards $k$'s private message (similar for the common message, otherwise it is zero. Following \cite{6920005}, we approximate the $l_0$-norm with a weighted l$_1$-norm as $\norm[\big]{\norm[\big]{\bm{w}_{b,k}^o}_2^2}_0 \approx \beta_{b,k}^o\norm[\big]{\norm[\big]{\bm{w}_{b,k}^p}_2^2}_1$, where $\beta_{b,k}^o$ are the weights calculated by $\beta_{b,k}^o = \left(\delta+\norm[\big]{\bm{w}_{b,k}^o}_2^2\right)^{-1}$, with $\delta > 0$ being a regularization constant. Note that this formulation is an application of l$_1$-norm to a quadratic function of the precoders, which yields a smooth, convex, and continuous function. A few notes on the weights: In case BS $b$ assigns low transmit powers to user $k$'s private or common signal, $\beta_{b,k}^o$ increases. Having the choice to serve only few selected users with reasonable transmit power allocation, the algorithm would eventually exclude messages with high weights in order to achieve higher rates, since the fronthaul link is a bottleneck in C-RAN. This interplay naturally balances the load between the BSs and leaves users being served only by BSs with reasonable transmit power. The fronthaul constraint becomes thereby\vspace{-.2cm}
    \begingroup\addtolength{\jot}{-.4cm}
    \begin{align}
        \sum_{{k \in \mathcal{K}}}\big( \beta_{b,k}^p\norm[\big]{\bm{w}_{b,k}^p}_2^2\, r_{k}^{p} + \beta_{b,k}^c\norm[\big]{\bm{w}_{b,k}^c}_2^2\, r_{k}^{c}\big) \leq C_b^{\mathrm{max}}, \quad \forall b\in\mathcal{B}. \nonumber\\\label{eq:fthl4}
    \end{align}\endgroup
    \\\vspace{-1.1cm}\\
    Note that the l$_1$-norm is omitted here as the argument is scalar. By introducing the slack variable $\bm{u}$, we transform \eqref{eq:fthl4} into the constraints \eqref{eq:betacon} and\vspace{-.2cm}
    \begin{equation}
        \sum_{{k \in \mathcal{K}}} \big(u_{b,k}^p\, r_{k}^{p} + u_{b,k}^c\, r_{k}^{c}\big) \leq C_b^{\mathrm{max}}, \quad \forall b\in\mathcal{B}, \label{eq:fthl5}\vspace{-.2cm}
    \end{equation}
    which is bilinear in the optimization variables and thereby amenable for applying ICA. In this context, we first provide an equivalent difference of convex formulation of \eqref{eq:fthl5} and subsequently create a first-order Taylor series expansion of the non-convex terms around operating points:\vspace{-.3cm}
    \begingroup\addtolength{\jot}{-.4cm}
    \begin{align}
        &\sum_{{k \in \mathcal{K}}} \frac{1}{4}\big(\big((u_{b,k}^p + r_{k}^{p})^2-(u_{b,k}^p - r_{k}^{p})^2\big) \nonumber\\
        &\qquad\qquad+ \big((u_{b,k}^c + r_{k}^{c})^2-(u_{b,k}^c - r_{k}^{c})^2\big)\big) \leq C_b^{\mathrm{max}},\label{eq:fthl6}
    \end{align}\endgroup
    \\\vspace{-1.1cm}\\
    and the first-order Taylor series expansion around $\tilde{\bm{u}}$ and $\tilde{\bm{r}}$, being feasible fixed values from the previous iteration's solution, is then formulated into \eqref{eq:fthl2}. Note that $\tilde{\bm{u}}$ and $\tilde{\bm{r}}$ are vectors of similar dimension as ${\bm{u}}$ and ${\bm{r}}$, respectively.
	\subsubsection{Quadratic transform}
	Due to the complex fractional form of constraints \eqref{eq:achp} and \eqref{eq:achc}, the problem \eqref{eq:Opt1} is of non-convex nature. We consider the following Lemma~\ref{lma_1}, to introduce $\gamma_k^p$ and $\gamma_{i,k}^c$ for the SINR terms, and transform the original non-convex problem.
	\begin{lemma}\label{lma_1}
	    A rewritten formulation of the optimization problem \eqref{eq:Opt1}, including ICA and $l_0$-norm relaxation, is given by\vspace{-.1cm}%
    	\begin{subequations}\label{eq:Opt3}
        \begingroup
        \addtolength{\jot}{-.1cm}
    		\begin{align}
    			\underset{\bm{w},\bm{r},\bm{{u}},\bm{\gamma}}{\mathrm{min}}\quad &\Psi  \tag{\ref{eq:Opt3}} \\
    			\mathrm{s.t.} \quad &\eqref{eq:pwr}, \eqref{eq:fthl2}, \eqref{eq:betacon},\nonumber\\
	    & r_{k}^{p} \leq \tau\,\log_2(1+\gamma_{k}^p),  &\forall k &\in \mathcal{K}, \label{eq:achp3}\\	
		& r_{i}^{c} \leq \tau\,\log_2(1+\gamma_{i,k}^c), &\forall i\in\mathcal{I}_k, \forall k &\in \mathcal{K}, \label{eq:achc3}\\
	    & \gamma_{k}^p \leq \Gamma_{k}^p,  &\forall k &\in \mathcal{K}, \label{eq:sinrp}\\	
		& \gamma_{i,k}^c \leq \Gamma_{i,k}^c, &\forall i\in\mathcal{I}_k, \forall k &\in \mathcal{K}. \label{eq:sinrc}
    		\end{align}\endgroup
    	\end{subequations} 
	    Given the stationary solution $(\bm{w}^\star,\bm{r}^\star,\bm{u}^\star,\bm{\gamma}^\star)$ to problem \eqref{eq:Opt3}, we note that $(\bm{w}^\star,\bm{r}^\star,\bm{u}^\star)$ is a stationary solution of the l$_1$-norm approximation of problem \eqref{eq:Opt1}.
	\end{lemma}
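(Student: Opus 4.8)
The plan is to treat the passage from the l$_1$-norm approximation of \eqref{eq:Opt1} to problem \eqref{eq:Opt3} purely as the introduction of the auxiliary SINR variables $\bm{\gamma}$, since the two problems share the objective $\Psi$ and the constraints \eqref{eq:pwr}, \eqref{eq:fthl2}, and \eqref{eq:betacon} verbatim; they differ only in that the coupled achievability constraints \eqref{eq:achp} and \eqref{eq:achc} are split into the rate bounds \eqref{eq:achp3}, \eqref{eq:achc3} and the SINR bounds \eqref{eq:sinrp}, \eqref{eq:sinrc}. I would establish the claim in two stages: first an exact feasibility (projection) equivalence, then preservation of stationarity.

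For the feasibility stage, I would fix $(\bm{w},\bm{r},\bm{u})$ and eliminate $\bm{\gamma}$. For the private streams, the pair \eqref{eq:achp3}--\eqref{eq:sinrp} is simultaneously satisfiable in $\gamma_k^p$ if and only if $r_k^p \leq \tau\log_2(1+\Gamma_k^p)$, i.e.\ if and only if \eqref{eq:achp} holds, because $\log_2(1+\cdot)$ is strictly increasing and the least restrictive admissible choice is $\gamma_k^p = \Gamma_k^p$. The same monotonicity argument applied to each decoder $k$ of common message $i$ shows that \eqref{eq:achc3}--\eqref{eq:sinrc} collapse to $r_i^c \leq \min_{k}\tau\log_2(1+\Gamma_{i,k}^c)$, which is precisely the family \eqref{eq:achc} in its multiple-access form. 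Hence the projection of the feasible set of \eqref{eq:Opt3} onto $(\bm{w},\bm{r},\bm{u})$ coincides with that of the l$_1$-approximated \eqref{eq:Opt1}, and since $\Psi$ does not depend on $\bm{\gamma}$ the two problems have identical optimal values.

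For the stationarity stage, I would write the KKT system of \eqref{eq:Opt3} and eliminate the multipliers attached to \eqref{eq:sinrp} and \eqref{eq:sinrc}. The key structural fact is that $\gamma_k^p$ enters only \eqref{eq:achp3}, where it appears through the increasing map $\tau\log_2(1+\gamma_k^p)$, and \eqref{eq:sinrp}, where it is upper-bounded by $\Gamma_k^p$; the objective is independent of it. Differentiating the Lagrangian in $\gamma_k^p$ therefore links the rate-bound multiplier $\mu_k^p$ and the SINR-bound multiplier $\nu_k^p$ by $\nu_k^p = \mu_k^p\,\tau/(\ln 2\,(1+\gamma_k^p))$. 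Because the derivative factor is strictly positive, activity of \eqref{eq:achp3} forces $\nu_k^p > 0$ and hence, by complementary slackness, activity of \eqref{eq:sinrp}, i.e.\ $\gamma_k^{p\star} = \Gamma_k^p$; if \eqref{eq:achp3} is slack then $\mu_k^p = \nu_k^p = 0$ and the value of $\gamma_k^p$ is immaterial to the remaining conditions. Substituting $\gamma_k^{p\star} = \Gamma_k^p$ back turns \eqref{eq:achp3} into \eqref{eq:achp} and, after eliminating $\nu_k^p$, reproduces exactly the stationarity, primal-feasibility, dual-feasibility, and complementarity conditions of the l$_1$-approximated \eqref{eq:Opt1} in the variables $(\bm{w},\bm{r},\bm{u})$; the identical argument handles each $\gamma_{i,k}^c$.

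I expect the main obstacle to be the common-message bookkeeping in the stationarity stage: the rate $r_i^c$ is shared across all decoders $k \in \mathcal{M}_i$, so several constraints \eqref{eq:achc3} couple to the single rate while each carries its own $\gamma_{i,k}^c$ and its own bound $\Gamma_{i,k}^c$. I would resolve this by noting that only the binding decoder(s) attaining $\min_{k}\tau\log_2(1+\Gamma_{i,k}^c)$ carry nonzero multipliers, so the elimination can be carried out per active constraint exactly as in the private case, while the non-binding $\gamma_{i,k}^c$ contribute nothing to the KKT system. A secondary care point is that the quadratic-transform surrogates $g^p$, $g^c$ in \eqref{eq:qtp}--\eqref{eq:qtc} are \emph{not} needed here: the lemma concerns the exact split through \eqref{eq:sinrp}--\eqref{eq:sinrc}, and the convexification of these SINR bounds is a separate step handled in the remainder of the theorem's proof.
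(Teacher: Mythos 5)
Your proposal is correct, and its first stage is exactly the paper's own argument: the paper proves the lemma by chaining \eqref{eq:achp3} and \eqref{eq:sinrp} into $r_{k}^{p} \leq \tau\log_2(1+\gamma_{k}^p) \leq \tau\log_2(1+\Gamma_{k}^p)$, observing that the monotonicity of $\log_2(1+\cdot)$ makes the tight choice $\gamma_k^p=\Gamma_k^p$, $r_k^p=\tau\log_2(1+\gamma_k^p)$ recover \eqref{eq:achp}, and noting the analogous statement for the common messages. Where you genuinely go beyond the paper is the second stage: the paper simply asserts that a stationary point $(\bm{w}^\star,\bm{r}^\star,\bm{u}^\star,\bm{\gamma}^\star)$ of \eqref{eq:Opt3} projects to a stationary point of the l$_1$-approximated \eqref{eq:Opt1}, whereas you verify it by writing the KKT system, deriving the multiplier link $\nu_k^p = \mu_k^p\,\tau/(\ln 2\,(1+\gamma_k^p))$ from the $\gamma_k^p$-stationarity equation, and eliminating the SINR-bound multipliers --- including the per-decoder bookkeeping for the shared common rate $r_i^c$. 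This buys rigor the paper's proof lacks (the transfer of the gradient term $-\nu_k^p\,\partial\Gamma_k^p/\partial\bm{w}$ into the chained constraint's contribution is precisely what makes "engineering the variables to be tight" legitimate at the level of stationary points, not just feasible points), at the cost of length. One small wording slip: you write that \emph{activity} of \eqref{eq:achp3} forces $\nu_k^p>0$; an active constraint may carry a zero multiplier, so the correct trigger is $\mu_k^p>0$ (which by complementary slackness implies activity). Your subsequent case $\mu_k^p=\nu_k^p=0$ already absorbs the degenerate active-but-zero-multiplier situation, so the case analysis as a whole is complete and the argument stands.
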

	\vspace{-.2cm}
	\ifarxiv
	\begin{proof}
	    Both problems \eqref{eq:Opt1} and \eqref{eq:Opt3} share the same objective and constraints \eqref{eq:achp3} and \eqref{eq:sinrp} can be formulated as 
	    \begin{equation}
	        r_{k}^{p} \leq \tau\,\log_2(1+\gamma_{k}^p) \leq \tau\,\log_2(1+\Gamma_{k}^p).
	    \end{equation}
	    When engineering the newly introduced optimization variable $\gamma_k^p$ to fulfill $r_{k}^{p} = \tau\,\log_2(1+\gamma_{k}^p)$ and $\gamma_{k}^p = \Gamma_{k}^p$, we get \eqref{eq:achp}. A similar statement holds for the common messages, which completes the proof.
	\end{proof}\fi
	Problem \eqref{eq:Opt3} is still non-convex due to the constraints \eqref{eq:sinrp} and \eqref{eq:sinrc}. However, such formulations are amenable for applying fractional programming techniques. Especially, we utilize the quadratic transform (QT) in multidimensional and complex case proposed by \cite[Theorem 2]{8314727}, tailored to the constraints \eqref{eq:sinrp} and \eqref{eq:sinrc} in this work. In more details, after subtracting the right term from both sides and applying QT, the functions can be formulated as given in \eqref{eq:qtp} and \eqref{eq:qtc}, respectively.
	In this context, \eqref{eq:qtp} and \eqref{eq:qtc} are the QT formulations of \eqref{eq:sinrp} and \eqref{eq:sinrc}, respectively, where ${\chi}_{k}^p$ and ${\chi}_{i,k}^c$ are auxiliary variables. Consider following remark on the convexity of \eqref{eq:qtp} and \eqref{eq:qtc}, and Lemma~\ref{lem:u} for obtaining the optimal auxiliary variables, when $\bm{w}$ and $\bm{\gamma}$ are fixed.\vspace{-.2cm}
	\begin{remark}
	    For fixed ${\chi}_{k}^p$ and ${\chi}_{i,k}^c$, the second terms in \eqref{eq:qtp} and \eqref{eq:qtc} become linear functions of the precoders, also the latter terms become convex. Thus, \eqref{eq:qtp} and \eqref{eq:qtc} denote convex functions of the procoding vectors and the SINR variables, in case the auxiliary variables are fixed.
	\end{remark}\vspace{-.2cm}
	\begin{lemma}\label{lem:u}
	    The optimal auxiliary variable results are\vspace{-.2cm}
	    \begingroup\addtolength{\jot}{-.2cm}
	\begin{align}
	    &{\chi}_{k}^p = \frac{\left(\bm{w}_{k}^p\right)^H \bm{h}_{k}}{\sigma^2 + \sum\limits_{j \in \mathcal{K}\setminus \{k\}}\left|\bm{h}_{k}^{H}\bm{w}_{j}^p \right|^2 + \sum\limits_{l \in \mathcal{K}\setminus \mathcal{I}_k}\left|\bm{h}_{k}^{H}\bm{w}_{l}^c \right|^2},\label{eq:ukp}\\
	    &{\chi}_{i,k}^c = \frac{\left(\bm{w}_{i}^c\right)^H \bm{h}_{k}}{\sigma^2 + \sum\limits_{j \in \mathcal{K}}\left|\bm{h}_{k}^{H}\bm{w}_{j}^p \right|^2 + \sum\limits_{l \in \mathcal{K}\setminus \mathcal{I}_k\cup \mathcal{I}'_{k,i}}\left|\bm{h}_{k}^{H}\bm{w}_{l}^c \right|^2}.\label{eq:ukc}
	\end{align}\endgroup
	\end{lemma}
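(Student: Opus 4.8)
The plan is to exploit that, with the precoders $\bm{w}$ and the SINR variables $\bm{\gamma}$ held fixed, each quadratic-transform surrogate that replaces a SINR term in \eqref{eq:qtp}--\eqref{eq:qtc} depends only on its own auxiliary variable and does so through a strictly concave quadratic. Writing the interference-plus-noise denominators appearing in \eqref{eq:sinrp}--\eqref{eq:sinrc} as
\begin{align*}
D_k^p &= \sigma^2 + \sum_{j\in\mathcal{K}\setminus\{k\}}\left|\bm{h}_k^H\bm{w}_j^p\right|^2 + \sum_{l\in\mathcal{K}\setminus\mathcal{I}_k}\left|\bm{h}_k^H\bm{w}_l^c\right|^2, \\
D_{i,k}^c &= \sigma^2 + \sum_{j\in\mathcal{K}}\left|\bm{h}_k^H\bm{w}_j^p\right|^2 + \sum_{l\in\mathcal{K}\setminus\mathcal{I}_k\cup\mathcal{I}'_{k,i}}\left|\bm{h}_k^H\bm{w}_l^c\right|^2,
\end{align*}
the private-stream surrogate is the map $\chi_k^p\mapsto 2\,\mathrm{Re}\{(\chi_k^p)^*(\bm{w}_k^p)^H\bm{h}_k\}-|\chi_k^p|^2 D_k^p$, and the common-stream surrogate is obtained from \eqref{eq:qtc} analogously. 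Since $\sigma^2>0$ guarantees $D_k^p, D_{i,k}^c>0$, the coefficient multiplying $-|\chi|^2$ is strictly negative, so each surrogate is strictly concave in its auxiliary variable and admits a unique maximizer. The quadratic transform of \cite{8314727} requires precisely this maximizer for the surrogate to recover the original ratio, so computing it is exactly the content of the lemma.

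Next I would obtain the maximizer from first-order stationarity. Treating $\chi_k^p$ and $(\chi_k^p)^*$ as independent through Wirtinger calculus, the derivative of $2\,\mathrm{Re}\{(\chi_k^p)^*(\bm{w}_k^p)^H\bm{h}_k\}-|\chi_k^p|^2 D_k^p$ with respect to $(\chi_k^p)^*$ is $(\bm{w}_k^p)^H\bm{h}_k-\chi_k^p D_k^p$; setting it to zero gives
\begin{equation*}
\chi_k^p = \frac{(\bm{w}_k^p)^H\bm{h}_k}{D_k^p},
\end{equation*}
which is \eqref{eq:ukp} once $D_k^p$ is written out, and the identical computation applied to the common-stream surrogate returns \eqref{eq:ukc}. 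Strict concavity (guaranteed by $D_k^p>0$) certifies this stationary point as the unique global maximizer, so beyond checking the sign of the denominators no further second-order argument is needed. Because every common auxiliary variable $\chi_{i,k}^c$ enters only the single constraint indexed by the pair $(i,k)$, the maximizations over $\{\chi_{i,k}^c\}$ decouple and are carried out termwise.

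Finally I would close the loop by verifying the identity that licenses the reformulation: substituting the stationary $\chi_k^p$ back into the surrogate yields $2\frac{|\bm{h}_k^H\bm{w}_k^p|^2}{D_k^p}-\frac{|\bm{h}_k^H\bm{w}_k^p|^2}{(D_k^p)^2}D_k^p=\frac{|\bm{h}_k^H\bm{w}_k^p|^2}{D_k^p}=\Gamma_k^p$, so that at the optimal auxiliary variables the transformed constraints \eqref{eq:qtp}--\eqref{eq:qtc} coincide with \eqref{eq:sinrp}--\eqref{eq:sinrc}, with the common case analogous. This is the specialization of \cite[Theorem 2]{8314727} to the present SINR expressions. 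I do not expect a genuine obstacle here; the only real care is the complex differentiation, namely keeping the conjugation convention consistent between the surrogate in \eqref{eq:qtp}--\eqref{eq:qtc} and the closed forms \eqref{eq:ukp}--\eqref{eq:ukc}, which is why I phrase stationarity through $\partial/\partial(\chi_k^p)^*$ and its common-message analogue.
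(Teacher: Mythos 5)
Your proposal is correct and follows essentially the same route as the paper: the paper's proof likewise sets the partial derivatives of $g^p$ and $g^c$ with respect to ${\chi}_{k}^p$ and ${\chi}_{i,k}^c$ to zero and solves, exactly your Wirtinger stationarity step. Your additional observations (strict concavity from $D_k^p,D_{i,k}^c>0$ certifying a unique global maximizer, and the tightness check recovering $\Gamma_k^p$) are left implicit in the paper but are consistent with it and flesh out the one-line argument rather than diverging from it.
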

	\vspace{-.2cm}
	\ifarxiv
	\begin{proof}
	    In \eqref{eq:qtp} and \eqref{eq:qtc}, the partial derivatives of $g^p(\bm{w},\bm{\gamma})$ and $g^c(\bm{w},\bm{\gamma})$ with respect to ${\chi}_{k}^p$ and ${\chi}_{i,k}^c$ are set to zero and solved, respectively. From this follow the optimal auxiliary variable results in \eqref{eq:ukp} and \eqref{eq:ukc}.
	\end{proof}\fi
	Due to the nature of the QT, which requires the update of auxiliary variables in an iterative fashion, the overall solution results an iterative procedure. Thereby, the algorithm computes the auxiliary variables following Lemma~\ref{lem:u}, for given beamforming vectors, and consequently solves a convex problem resulting optimal resource allocation variables.
	At each iteration, the optimization problem is given in \eqref{eq:Opt2} (as defined above).
	Hereby, the objective function \eqref{eq:Opt2} and the feasible set defined by all constraints are convex. Therefore, problem \eqref{eq:Opt2} is a convex optimization problem that can efficiently solved using established solvers, such as CVX \cite{cvx}. \ifarxiv
	\else
	Proofs of the Lemmas can be found in the extended paper version \cite{comebackkid}.\vspace{-.3cm}
	\fi 
	\ifarxiv
	\section{Solution to Problem~\eqref{eq:Opt2}}\label{appB}
	Similar to the feasible fixed values $\tilde{\bm{u}}$ and $\tilde{\bm{r}}$, which originate from the inner-convex approximation, $\beta_{b,k}^o$ and ${\chi}_{k}^o$ need to be updated iteratively. Thereby, to solve the resource management problem efficiently, a procedure that solves the convex problem \eqref{eq:Opt2} and updates all auxiliary variables, weights, and fixed values iteratively is implemented in Algorithm \ref{alg}.
	\begin{algorithm}[h]
    \caption{RSMA-enabled Resource Management for Rate Gap Minimization}
    \begin{algorithmic}[1]
	\STATE Initialize feasible precoders $\bm{w}$\vspace*{-.05cm}
    \STATE Determine RSMA-related sets $\mathcal{M}_k$, $\mathcal{I}_k$, $\mathcal{I}'_{i,k}$, $\pi_k$\vspace*{-.05cm}
	\REPEAT
	\STATE Update the auxiliary variables ${\chi}_{k}^p$ and ${\chi}_{i,k}^c$, the weights $\beta_{b,k}^o$, and the feasible fixed values $\tilde{u}_{b,k}^o$ and $\tilde{r}_k^o$ \vspace*{-.05cm}
	\STATE Solve the convex problem \eqref{eq:Opt2} 
	\UNTIL{convergence}
    \end{algorithmic}
    \label{alg}
    \end{algorithm}%
	
	In Algorithm~\ref{alg}, detailed steps for beamformer design and rate allocation are provided. Initially, the beamformers are computed to feasible values. This can be done using a \emph{random} initialization or by computing the maximum ratio transmitters. Subsequently, RSMA-related sets, i.e., $\mathcal{M}_k$, $\mathcal{I}_k$, $\mathcal{I}'_{i,k}$, and $\pi_k$, and thereafter, the clustering sets for private and common streams $\mathcal{K}_b^p$ and $\mathcal{K}_b^p$, are computed. 
	Two steps are repeated until convergence: $(a)$ The auxiliary variables are updated according to \eqref{eq:ukp} and \eqref{eq:ukc}; $(b)$ Problem \eqref{eq:Opt2} is solved using CVX. Consider the following Lemma~\ref{lma_3} on the convergence of Algorithm~\ref{alg}.
    
    \begin{lemma}\label{lma_3}
        The iterative procedure in Algorithm~\ref{alg} yields the stationary solution $(\bm{w}^\star,\bm{r}^\star,\bm{u}^\star)$ to the l$_1$-norm relaxed version of problem \eqref{eq:Opt1}.
    \end{lemma}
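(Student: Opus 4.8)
The plan is to prove convergence of Algorithm~\ref{alg} by the standard successive convex approximation (SCA) argument: I would show that the generated objective sequence is monotonically non-increasing and bounded below, hence convergent, and that every limit point of the iterates is a Karush--Kuhn--Tucker (KKT) point of the $\ell_1$-relaxed version of problem \eqref{eq:Opt1}. By Lemma~\ref{lma_1}, it suffices to argue that the iteration solves \eqref{eq:Opt3} to a stationary point, since such a point projects onto a stationary point of the $\ell_1$-relaxed \eqref{eq:Opt1}.

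First I would verify the surrogate properties of the two approximations used to build \eqref{eq:Opt2}. For the inner-convex approximation of the fronthaul constraint, I would observe that \eqref{eq:fthl6} is an \emph{exact} difference-of-convex rewriting of the bilinear constraint \eqref{eq:fthl5}, so that linearizing the concave terms around $(\tilde{\bm{u}},\tilde{\bm{r}})$ yields the convex function in \eqref{eq:fthl2} that (i) globally upper-bounds the true constraint, (ii) coincides with it at the expansion point, and (iii) has the same gradient there. For the quadratic transform of the fractional constraints \eqref{eq:sinrp}--\eqref{eq:sinrc}, I would invoke \cite[Theorem 2]{8314727} together with Lemma~\ref{lem:u}: with $\chi_k^p,\chi_{i,k}^c$ chosen by \eqref{eq:ukp}--\eqref{eq:ukc}, the surrogates $g^p,g^c$ in \eqref{eq:qtp}--\eqref{eq:qtc}, evaluated at the current $(\bm{w},\bm{\gamma})$, reproduce the original SINR constraints exactly and match their gradients at that point.

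Second, I would prove monotone descent. After the closed-form update of the auxiliary variables via Lemma~\ref{lem:u} and of the Taylor reference points $(\tilde{\bm{u}},\tilde{\bm{r}})$ to the current iterate, the exactness properties from the first step imply that the current iterate remains feasible for the refreshed convex program \eqref{eq:Opt2}. Consequently, re-solving \eqref{eq:Opt2} cannot increase the objective, i.e. $\Psi^{(t+1)}\le\Psi^{(t)}$; since $\Psi\ge 0$ by construction, the monotone sequence $\{\Psi^{(t)}\}$ converges. Boundedness of the iterates themselves follows from the per-BS power constraint \eqref{eq:pwr}, so the sequence admits convergent subsequences.

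Third, I would establish stationarity of the limit. Because the surrogates satisfy the value-matching, gradient-matching, and global upper-bounding conditions required by SCA convergence theory, every limit point $(\bm{w}^\star,\bm{r}^\star,\bm{u}^\star,\bm{\gamma}^\star)$ of the bounded iterate sequence is a KKT point of \eqref{eq:Opt3}, which via Lemma~\ref{lma_1} yields the claimed stationary point $(\bm{w}^\star,\bm{r}^\star,\bm{u}^\star)$ of the $\ell_1$-relaxed \eqref{eq:Opt1}. The main obstacle is precisely this last step: verifying that the first-order conditions of the two distinct approximation mechanisms hold \emph{jointly} at the limit, so that the KKT system of \eqref{eq:Opt2} collapses to that of \eqref{eq:Opt3}. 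This requires a suitable constraint qualification to make the KKT conditions necessary, and it is here that the exactness of the $\chi$-update from Lemma~\ref{lem:u} is essential, since it closes the gap of the quadratic-transform surrogate at every limiting iterate.
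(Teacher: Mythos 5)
Your proposal is correct in substance and shares the paper's skeleton---both arguments reduce the claim via Lemma~\ref{lma_1} to showing that the iterates reach a stationary point of \eqref{eq:Opt3}---but the convergence machinery you deploy is genuinely different. The paper's proof is a short block-coordinate-descent argument: it views Algorithm~\ref{alg} as alternating between the closed-form auxiliary update of Lemma~\ref{lem:u} and the convex solve of \eqref{eq:Opt2}, and then imports the ``equivalent solution'' and ``equivalent objective'' conditions of \cite[Theorem 3]{8314727} to conclude that constraints \eqref{eq:qtp} and \eqref{eq:qtc} collapse to \eqref{eq:sinrp} and \eqref{eq:sinrc} at the stationary auxiliary variables, so that a stationary solution of \eqref{eq:Opt2} is also one of \eqref{eq:Opt3}. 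You instead build a self-contained SCA/MM argument: surrogate tightness and gradient consistency for \emph{both} approximation mechanisms, feasibility of the current iterate after the refresh, monotone descent of $\Psi$ with boundedness from \eqref{eq:pwr}, and KKT conditions at limit points under a constraint qualification. What your route buys is coverage of the inner-convex approximation: the paper's proof never addresses the fact that the expansion points $(\tilde{\bm{u}},\tilde{\bm{r}})$ in \eqref{eq:fthl2} are refreshed at every iteration, whereas your verification of properties (i)--(iii) for the DC linearization of \eqref{eq:fthl5}--\eqref{eq:fthl6} is exactly what is needed to keep the descent argument valid under that refresh. What the paper's route buys is brevity, by outsourcing the fractional-programming part to a published theorem rather than re-deriving the descent and stationarity properties.

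One caveat applies to your proof (and, silently, to the paper's): neither handles the re-weighting of $\beta_{b,k}^o$ in \eqref{eq:betacon}. Since $\beta_{b,k}^o$ is recomputed from the current precoders and enters a \emph{constraint} rather than the objective, feasibility of the previous iterate under the refreshed \eqref{eq:betacon} is not automatic---if $\norm{\bm{w}_{b,k}^o}_2$ shrinks between iterations, the weight grows and the old pair $(\bm{w},\bm{u})$ can violate the new constraint---so your claim that ``the current iterate remains feasible for the refreshed convex program'' needs either to treat $u_{b,k}^o$ as re-initialized together with $\beta_{b,k}^o$, or to state the convergence for frozen weights, which is consistent with the lemma being phrased for the l$_1$-norm relaxed (i.e., fixed-weight) problem. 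Making that choice explicit would close the only real gap in your argument.
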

    \begin{proof}
        With given Lemma \ref{lma_1}, the proof remains to show that the iterative procedure yields $(\bm{w}^\star,\bm{r}^\star,\bm{u}^\star,\bm{\gamma}^\star)$, a stationary solution to problem \eqref{eq:Opt3}. The steps in this proof are similar to the proof of \cite[Theorem 3]{8314727} and rely on the therein provided conditions of equivalent solution and equivalent objective. Algorithm \ref{alg} is a block coordinate descent algorithm for problem \eqref{eq:Opt2}, which is a convex problem. Therefore, such algorithm is guaranteed to converge to a stationary solution of problem \eqref{eq:Opt2}. Due to the definition of the auxiliary variables, equations \eqref{eq:qtp} and \eqref{eq:qtc} yield \eqref{eq:sinrp} and \eqref{eq:sinrc}, respectively, if and only if $({u}_{k}^p)^\star$ and $({u}_{i,k}^c)^\star$ are the optimal auxiliary variables of the stationary solution to \eqref{eq:Opt2}. As problems \eqref{eq:Opt3} and \eqref{eq:Opt2} share the same objective, and constraints \eqref{eq:qtp} and \eqref{eq:qtc} yield \eqref{eq:sinrp} and \eqref{eq:sinrc}, respectively, Algorithm \ref{alg} also converges to a stationary solution to problem \eqref{eq:Opt3}. This completes the proof.
    \end{proof}
	
	The overall computational complexity of Algorithm~\ref{alg} depends on the interior-point method to solve problem \eqref{eq:Opt2}, whereas the upper-bound computational complexity is $\mathcal{O}({V}_{\text{max}}(d_1)^{3.5})$. Note that ${V}_{\text{max}}$ is the number of iterations until convergence for the worst-case and $d_1={K}(2{B}({L}+1)+{K}+3)$ is the total number of variables. Most resilience mechanisms depend on solving relaxed versions of problem \eqref{eq:Opt2}, and differ in the number of variables. The complexity can be derived in a similar manner.

    \section{Generalized Assignment Problem-based Clustering}\label{app4}
    \label{GAP}
	The clustering sets are computed by a generalized assignment problem (GAP) formulation. To obtain $\mathcal{K}_b^p$ and $\mathcal{K}_b^c$, we define the binary variable $\nu_{b,k}^o\in\{0,1\}$, referring to BS $b$ serving the private (or common) message of user $k$. Next, we give a GAP-based formulation which captures the benefit of assigning BS $b$ to serve a message intended for user $k$. Such benefit is defined in terms of the channel norm, so as to preferably utilize strong links. This can be mathematically formulated as
	\begin{subequations}\label{eq:GAP}
		\begin{align}
			\underset{\bm{\nu}}{\mathrm{max}}\quad &\sum_{(k,b)\in(\mathcal{K},\mathcal{B})} \left( \nu_{b,k}^p \norm[\big]{\bm{h}_{b ,k}}_2^2 + \nu_{b,k}^c \sum_{i\in\mathcal{M}_k} \norm[\big]{\bm{h}_{b ,i}}_2^2 \right)\hspace{-3.5cm}&&\hspace{2cm} \tag{\ref{eq:GAP}} \\
		    \mathrm{s.t.} \quad &\sum_{b\in\mathcal{B}} \nu_{b,k}^o \leq B_k^\mathrm{max} &\forall k\in\mathcal{K}, \forall o&\in\{p,c\} \label{eq:nmax},\\
	        &\sum_{k\in\mathcal{K}} \nu_{b,k}^p + \nu_{b,k}^c \leq I_b^\mathrm{max} &\forall b&\in\mathcal{B} \label{eq:kmax},\\
	        &\nu_{b,k}^p + \nu_{b,k}^c \leq 1 &\forall k\in\mathcal{K},\forall b&\in\mathcal{B} \label{eq:diffbs}.
	   \end{align}
	\end{subequations} 
	Problem \eqref{eq:GAP} maximizes a channel quality utility by jointly optimizing the binary clustering variables $\bm{\nu}=\mathrm{vec}( \nu_{b,k}^o | \forall (b,k)\in(\mathcal{B},\mathcal{K}), \forall o\in\{p,c\} )$ and is in the form of a integer linerar program (ILP). Constraint \eqref{eq:nmax} restricts the maximum number of BSs that serve the private (common) message of each user, where $B_k^\mathrm{max}$ are the maximum number of BSs per message. Each BS can only serve a fixed number of messages, which is denoted by \eqref{eq:kmax}, where $I_b^\mathrm{max}$ describes the maximal amount of supported messages. Both of the previously mentioned constraints help balancing the load. Constraint \eqref{eq:diffbs} is especially chosen to enhance the resilience behavior of the proposed RSMA scheme. The idea behind \eqref{eq:diffbs} is to split the serving BSs of both private and common message of user $k$. Thereby, the scheme behaves more resilient when outages occur. 
	
	Problem \eqref{eq:GAP} follows a GAP structure \cite{6697043} and can be solved using well studied methods, e.g., branch and cut algorithm \cite{6567888}. This allows us to fix the clustering sets $\mathcal{K}_b^p$ and $\mathcal{K}_b^c$, by setting $\mathcal{K}_b^o = \{ k\in\mathcal{K}|\nu_{b,k}^o = 1 \}$. 
	\fi
    \bibliographystyle{IEEEtran}
	\bibliography{bibliography}
\end{document}